 \documentclass[lettersize,journal]{IEEEtran}

\IEEEoverridecommandlockouts
% The preceding line is only needed to identify funding in the first footnote. If that is unneeded, please comment it out.
\usepackage{amsthm,amsmath,amssymb}
\usepackage{caption}
\usepackage{graphicx}
\usepackage{stfloats}
\usepackage[font=small]{caption}
\usepackage{cite}
\usepackage{booktabs}
\usepackage[labelformat=simple]{subcaption}
\usepackage{soul}
\usepackage{tabularx}
\usepackage{textcomp}
\usepackage{mathtools}
\usepackage{color}
\usepackage[dvipsnames]{xcolor}
\usepackage[normalem]{ulem}
\usepackage[thicklines]{cancel}

\newcommand{\mbf}{\mathbf}
\newcommand{\mrm}{\mathrm}
\newcommand{\bsm}{\boldsymbol}
\newcommand{\mal}{\mathcal}
\newcommand{\mbb}{\mathbb}
%\newcommand\soutpars[1]{\let\helpcmd\sout\parhelp#1\par\relax\relax}
%\long\def\parhelp#1\par#2\relax{%
%  \helpcmd{#1}\ifx\relax#2\else\par\parhelp#2\relax\fi%
%}
%
%\usepackage{array}
%\newcommand{\PreserveBackslash}[1]{\let\temp=\\#1\let\\=\temp}
%\newcolumntype{C}[1]{>{\PreserveBackslash\centering}m{#1}}
%\newcolumntype{R}[1]{>{\PreserveBackslash\raggedleft}m{#1}}
%\newcolumntype{L}[1]{>{\PreserveBackslash\raggedright}m{#1}}
%
%\newcommand{\blue}[1]{\color{black}{#1}\color{black}}
%\newcommand{\maroon}[1]{\color{Maroon}{#1}\color{black}}
%
%\renewcommand\thesubfigure{(\alph{subfigure})}

% correct bad hyphenation here
\hyphenation{op-tical net-works semi-conduc-tor}

\newtheorem{definition}{Definition}
\newtheorem{assumption}{Assumption}
\newtheorem{proposition}{Proposition}

\newtheorem{remark}{Remark}

\setlength{\textfloatsep}{12pt}

% \hyphenation{op-tical net-works semi-conduc-tor}
\usepackage{array}
\usepackage{algorithmic}
\usepackage{amsmath}

\usepackage{graphicx,float}
%\graphicspath{{Pictures/}}

\usepackage[utf8]{inputenc}
\usepackage{enumerate}
\usepackage{color}
\usepackage{amsfonts}
\usepackage{algorithm}
\usepackage{algorithmic}
% \usepackage[numbers,sort&compress]{natbib} 
% \setlength{\bibsep}{0.1ex}    %vertical spacing between references
% \def\IEEEbibitemsep{0pt plus .1pt}

 %Use Input in the format of Algorithm
 %Use Output in the format of Algorithm
\captionsetup[figure]{name={Fig.},labelsep=period}

\def\BibTeX{{\rm B\kern-.05em{\sc i\kern-.025em b}\kern-.08em
    T\kern-.1667em\lower.7ex\hbox{E}\kern-.125emX}}

\begin{document}
%
% paper title
% Titles are generally capitalized except for words such as a, an, and, as,
% at, but, by, for, in, nor, of, on, or, the, to and up, which are usually
% not capitalized unless they are the first or last word of the title.
% Linebreaks \\ can be used within to get better formatting as desired.
% Do not put math or special symbols in the title.
% \title{Simultaneous Active and Passive Information Transfer for Reconfigurable Intelligent Surface Aided MIMO Systems}
% \title{{Active$\&$Passive Information Transfer for RIS-Aided MIMO-OFDM Systems: Joint Channel Estimation and Information Recovery}}
\title{Interference-Cancellation-Based Channel Knowledge Map Construction and Its Applications to Channel Estimation}

\author{Wenjun Jiang,~\IEEEmembership{Graduate Student Member,~IEEE}, 
	Xiaojun Yuan,~\IEEEmembership{Senior Member,~IEEE}, Boyu Teng,~\IEEEmembership{Graduate Student Member,~IEEE}, Hao Wang,~\IEEEmembership{Member,~IEEE}, and Jing Qian,~\IEEEmembership{Member,~IEEE} 
	
	% \thanks{
	% 		% This work has been accepted for publication in IEEE Transactions on Signal Processing, 2023.

	% 		% This work was supported in part by General Program of National Natural Science Foundation of China under Grant 62071090 and in part bySichuanScienceandTechnologyProgram under Grant 2022ZYD0120. The preliminary version of this work has been accepted for publication in the proceedings of 2023 IEEE International Symposium on Information Theory (ISIT). \emph{(Corresponding author: Xiaojun Yuan.)}
						
	% 		W. Jiang and X. Yuan are with the National Key Laboratory of Wireless Communications, University of Electronic Science and Technology of China, Chengdu 611731, China. (e-mail: wjjiang@std.uestc.edu.cn;  xjyuan@uestc.edu.cn).

	% 		M. Di Renzo is with Université
	% 		Paris-Saclay, CNRS, CentraleSupélec, Laboratoire des Signaux et Systèmes, 3 Rue Joliot-Curie, 91192 Gif-sur-Yvette, France. (e-mail: marco di-renzo@universite-paris-saclay.fr)
	% 		}
	
	}

\maketitle

%\section{Abstract}
%%Reconfigurable intelligent surface (RIS) has been envisioned as a promising technology for sixth-generation network. 
%
%%Recent information-theoretic research on reconfigurable intelligent surface (RIS) suggested that applying RIS for information transfer significantly improves the multiplexing gain, compared to applying RIS for passive beamforming.
%
\vspace{-0.7 cm}
\begin{abstract}
Channel knowledge map (CKM) is viewed as a digital twin of wireless channels, providing location-specific channel knowledge for environment-aware communications. A fundamental problem in CKM-assisted communications is how to construct the CKM efficiently. Current research focuses on interpolating or predicting channel knowledge based on error-free channel knowledge from measured regions, ignoring the extraction of channel knowledge. This paper addresses this gap by unifying the extraction and representation of channel knowledge. We propose a novel CKM construction framework that leverages the received signals of the base station (BS) as online and low-cost data. Specifically, we partition the BS coverage area into spatial grids. The channel knowledge per grid is represented by a set of multi-path powers, delays, and angles, based on the principle of spatial consistency. In the extraction of these channel parameters, the challenges lie in strong inter-cell interferences and non-linear relationship between received signals and channel parameters. To address these issues, we formulate the problem of CKM construction into a problem of Bayesian inference, employing a block-sparsity prior model to characterize the path-loss differences of interferers. Under the Bayesian inference framework, we develop a hybrid message-passing algorithm for the interference-cancellation-based CKM construction. Based on the CKM, we obtain the joint frequency-space covariance of user channel and design a CKM-assisted Bayesian channel estimator. The computational complexity of the channel estimator is substantially reduced by exploiting the CKM-derived covariance structure. Numerical results show that the proposed CKM provides accurate channel parameters at low signal-to-interference-plus-noise ratio (SINR) and that the CKM-assisted channel estimator significantly outperforms state-of-the-art counterparts.
\end{abstract}

\begin{IEEEkeywords}
	digital twin, channel knowledge map, multiple-input multiple-output, orthogonal frequency-division multiplexing,  inter-cell interference, channel estimation
\end{IEEEkeywords}

\vspace{-0.3cm}
\section{Introduction}

Sixth-generation (6G) wireless networks are envisioned to provide ubiquitous connectivity (up to $10^7$ nodes per $\mrm{km}^2$), ultra-high throughput (peak rate of Tbps), hyper-reliable and low-latency communications (HRLLC), and centimeter-level positioning accuracy around 2030 \cite{Zhengquan,you2021towards}. Achieving these key performance indicators (KPIs) necessitates the exploration of various cutting-edge technologies. One promising advancement is the evolution from massive multiple-input multiple-output (MIMO) to extremely large-scale MIMO (XL-MIMO) \cite{XLMIMO}, which unlocks ultra-high spatial multiplexing gains. In parallel, millimeter-wave and terahertz communications have been extensively studied to exploit the vast bandwidth for high data rates. The applications of XL-MIMO, millimeter-wave, and terahertz techniques lead to a substantial increase in node density, array size, and communication bandwidth. Consequently, the dimension of the channel response expands dramatically in both frequency and space domains. This expansion introduces significant challenges to physical-layer processing, including the increased computational complexity in channel state information (CSI) acquisition and the heavy training overhead required for effective beamforming. 

% On the other hand, dense wireless nodes and high-dimensional channels imply an abundance of wireless data which carry the characteristics of local propagation environment. This drives a paradigm shift towards environment-aware communications. Existing environment-aware technologies include physical environment maps \cite{Map} and radio environment maps \cite{zhao2006network}. A physical environment map comprises 3D models of scatterers, electromagnetic parameters, and propagation mechanisms of electromagnetic waves. Based on this information, a ray-tracing platform is established to simulate the deterministic propagation of electromagnetic waves \cite{Ray}. Due to intensive computation in ray tracing, physical environment maps are mainly employed for offline system evaluation and optimization. Radio environment maps primarily focus on spectrum utilization information and are widely applied in cognitive radio systems for interference management and resource allocation \cite{Yilmaz}.

On the other hand, dense wireless nodes and high-dimensional channels imply an abundance of wireless data that carries the characteristics of local propagation environment. This drives a paradigm shift towards environment-aware communications. A representative technology of environment awareness is the radio map \cite{RME} which captures the spatial distribution of radio signal characteristics. The construction of a radio map can be converted to characterizing the propagation of electromagnetic (EM) waves. To this end, ray-tracing is a promising technique by effectively solving Maxwell’s equations \cite{Ray}. However, the ray-tracing simulation involves intensive computation and relies on the 3D models and EM properties of scatters, limiting its application. 
%Recent research \cite{WINERT} has developed a neural network-based ray-tracer that models the reflection and direct propagation of EM waves with lower complexity. 
Besides ray-tracing-based radio maps, the authors in \cite{JianWen} exploit the positions and shapes of scatterers and propose a random forest method for a path-loss map. Typically, radio maps focus on power-related metrics such as path-loss \cite{JianWen}, interference power \cite{Int}, and power spectral density \cite{PSD}, but these metrics do not fully capture the intrinsic characteristics of wireless channels.

Recently, channel knowledge map (CKM) has been proposed as an advancement of radio maps, representing an initial version of the digital twin for wireless channels \cite{CKM2}. 
CKM can be expressed as a look-up table or mapping from a user location to channel knowledge, reflecting the intrinsic characteristics of local propagation environment. The input location can be effectively obtained by emerging integrated sensing and communication (ISAC) technologies \cite{ISAC} or existing positioning systems such as the GPS system. The output channel knowledge can be tailored to specific requirements, including scenario information (such as LOS existence) and channel model parameters (such as multi-path delays and angles). Furthermore, the running latency is substantially lower than that of ray-tracing simulator \cite{Ray}, enabling CKM-assisted real-time physical-layer processing. In \cite{BF_CKM}, the authors introduced a channel angle map (CAM) and a beam index map (BIM) to enable hybrid beamforming with light training overhead. In \cite{Lol_CKM}, the authors leveraged CKM to provide positioning anchors and optimize the selection of anchors for positioning enhancement. Besides CKM-assisted communications, existing CKM-related research includes the construction of CKM. The authors in \cite{CKM1} proposed spatial interpolation methods to construct the CKM across an entire target region using channel data from partially measured regions. In \cite{EM_CKM}, the authors adopted the expectation-maximization (EM) method to develop a channel gain map (CGM).

Current methods for CKM construction \cite{CKM1,EM_CKM} assume noise-free channel parameters from a subset of regions and aim to interpolate (or predict) channel parameters at unmeasured locations, ignoring the crucial step of parameter extraction. One way to acquire noise-free channel parameters is offline measurements with high transmission powers and high-precision channel sounders \cite{Chl_sounder}. This approach is costly and highly sensitive to changes in the propagation environment. To mitigate these problems, a promising alternative is to store and reuse received signals at the base station (BS) as low-cost and online data. However, the use of BS data induces new challenges. The transmission powers of user terminals are typically limited, and strong interferences from neighboring cells are common \cite{TS38213}. Moreover, the received signals are nonlinear functions of the channel parameters (e.g., the array response of a path angle as a complex exponential function \cite{tse2005fundamentals}). As a result, the extraction of channel parameters becomes critical to CKM accuracy and cannot be ignored. To address this gap, we focus on constructing CKM in uplink multiple-input multiple-output orthogonal frequency-division multiplexing (MIMO-OFDM) systems under inter-cell interference (ICI). We propose a comprehensive framework that unifies the extraction and representation of channel parameters using BS data. To enhance CKM accuracy, we develop an interference-cancellation-based message-passing algorithm under a Bayesian inference framework. Furthermore, we apply the CKM to solve the fundamental problem of channel estimation and design a high-performance channel estimator. The main contributions of this paper are listed as follows:
\vspace{-0.05cm}
\begin{itemize}
    \item \textbf{CKM Construction Framework Based on BS received signals with ICI}: We establish a CKM construction framework that leverages the received signals of BS in MIMO-OFDM systems with ICI. The BS coverage area is divided into spatial grids with the received signals divided accordingly. In each grid, we represent the channel knowledge  by a set of channel model parameters based on the principle of spatial consistency \cite{TR38901} (which indicates the high channel similarity within a correlation distance). The channel model parameters comprise the number of effective paths, path powers, delays, and azimuth/elevation angles of arrival, capturing all elements of the channel response except for small-scale fading coefficients. By adjusting the number of effective paths, the CKM strikes a balance between the channel representation error and the construction cost.
    \item \textbf{Interference-Cancellation-Based Bayesian CKM Construction}: We formulate the problem of CKM construction into a Bayesian inference problem, where interference channels are modeled as sparse matrices to characterize the significant differences of path-loss among interferers. By employing the Bethe method \cite{BP}, we construct a variational approximation of the posterior distribution of user/interference channels. Then, we develop a hybrid message-passing algorithm for the variational solution. In each message-passing iteration, the algorithm simultaneously extracts the user channel parameters and cancels the interference components.
    % extract the interference channel parameters for interference cancellation.
    \item \textbf{CKM-Assisted Bayesian Channel Estimator}: We design a Bayesian channel estimator that exploits the joint frequency-space covariance of the user channel provided by the CKM. Besides, the spatial covariance of interference signals is estimated based on the power delay profile (PDP) of the user channel from the CKM. By exploiting the structural property of the user channel covariance, we reduce the computational complexity of the channel estimation from $\mathcal{O}(N^3 M^3)$ to $\mathcal{O}(M^2(M+\bar{L})+(M+N)\bar{L}^2+\bar{L}^3)$ for an $N$-subcarrier and $M$-antenna system assisted by a $\bar{L}$-path CKM. Furthermore, we analyze the performance bound of the channel estimator in terms of the accuracy of CKM outputs.
\end{itemize}
Simulation results show that the CKM construction algorithm is able to extract accurate channel parameters at relatively low SINRs and meter-level spatial grids. As the grid size increases, the desired CKM accuracy can be maintained by increasing the number of effective paths. Besides, the CKM-assisted channel estimator significantly outperforms state-of-the-art schemes.

\vspace{-0.02cm}
{\emph{Organization:} In Sec. II, we introduce the received signal model for the considered MIMO-OFDM system with ICI. In Sec. III, we propose the framework of CKM construction based on the BS received signals and spatial consistency principle. In Sec. IV, we formulate the problem of CKM construction into a Bayesian inference problem and develop a hybrid message-passing algorithm for CKM construction. In Sec. V, we exploit the CKM outputs to design a high-performance channel estimator. In Sec. VI and Sec. VII, we provide numerical results and conclude this paper, respectively.}

\vspace{-0.02cm}
\emph{Notation:} We use bold capital letters (e.g., $\mathbf X$) for matrices and bold lowercase letters (e.g., $\mathbf x$) for vectors. $(\cdot)^T$, $(\cdot)^*$, and $(\cdot)^H$ denote the transpose, the conjugate, and the conjugate transpose, respectively. $\delta(\cdot)$ denotes the Dirac delta function. ${\rm diag}(\mathbf x)$ denotes the diagonal matrix with the $i$-th diagonal entry being the $i$-th element of $\mbf x$. $\mrm{tr}(\cdot)$ and $\mrm{vec}(\cdot)$ denote the trace and vectorization operators, respectively. $(\cdot)^{\dagger}$ denotes the Moore-Penrose inverse. $\otimes$ and $\odot$ denote the Kronecker and Hadamard product, respectively. Matrix $\mbf I$ denotes an identity matrix with an appropriate size. For a random variable $x$, its probability density function (pdf) is denoted by $p(x)$. $\mathbb E[\cdot]$ denotes the expectation operator. The pdf of a complex Gaussian random vector $\mathbf x \in \mathbb C^N$ with mean $\bsm{\mu}$ and covariance $\bsm{\Sigma}$ is denoted by $\mathcal{CN}(\mathbf x ; \bsm{\mu} , \bsm{\Sigma}) = |\bsm{\Sigma}|^{-1} {\rm exp}(-(\mathbf x - \bsm{\mu})^H (\bsm{\Sigma})^{-1} (\mathbf x - \bsm{\mu}))/\pi^N$. The pdf of a von Mises (V-M) random variable $ x \in [0,2\pi)$ with mean direction $\mu$ and concentration $\kappa$ is denoted by $ \mal{VM}(x;\mu,\kappa)=(2 \pi I_0(\kappa))^{-1}\exp(\kappa \cos(x-\mu))$, where $I_0(\cdot)$ denotes the modified Bessel function of the first kind and order $0$.

\vspace{-0.3cm}
\section{System Model}

\begin{figure}[h] 
	\centering
	\includegraphics[width = 3.5 in]{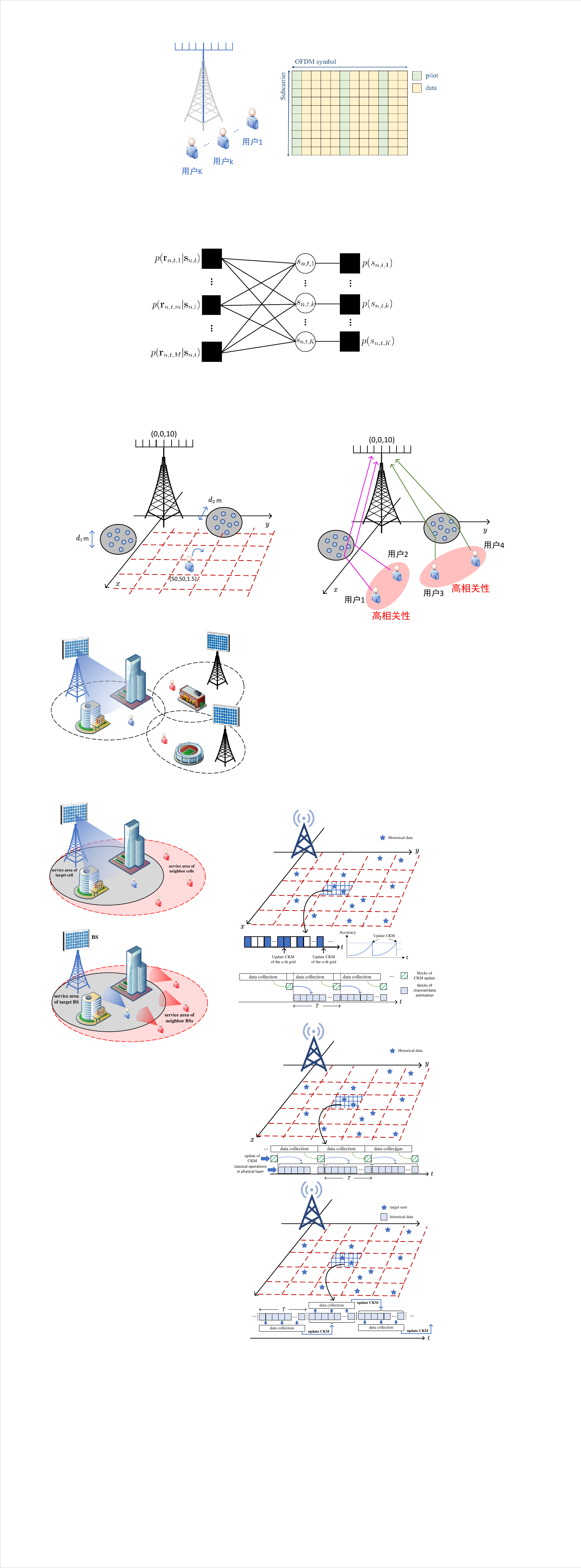}
	\caption{An illustration of the uplink transmission system with a target user and $3$ inter-cell interferers.}
	\label{system}
\end{figure}

Consider the uplink MIMO-OFDM system illustrated in Fig. \ref{system}. The BS is equipped with a uniform rectangular array (URA) of $M = M_1 \times M_2$ antennas, where the antenna spacing is assumed to be half-wavelength. Note that a BS can serve massive users by time-division multiplexing (TDD) and frequency-division multiplexing (FDD). Since the signal model for each user has the same form, we focus on a target user that communicates with the BS over $N$ subcarriers in the $t$-th time-slot. Denote by $\Delta_f$ the subcarrier spacing. The target user is equipped with an omnidirectional antenna, and the user signal experiences a multi-path channel. Denote by $L_t$ the number of multi-paths in time-slot $t$. Denote by $\tilde{\tau}_{t,l}$, $\tilde{\theta}_{t,l}$, and $\tilde{\phi}_{t,l}$ the path delay, the azimuth angle-of-arrival (AOA), and the zenith AOA of the $l$-th path in the $t$-th time-slot, respectively. Let $\tau_{t,l} \triangleq 2 \pi \Delta_f \tilde{\tau}_{t,l}$, ${\theta}_{t,l} \triangleq \pi \sin \tilde{\theta}_{t,l} \cos \tilde{\phi}_{t,l} $, and $\phi_{t,l} \triangleq \pi \sin \tilde{\phi}_{t,l} $ be the normalized path delay, the directional component in the azimuth direction, and the directional component in the zenith direction, respectively. Then, the frequency-space-domain channel of the target user at time-slot $t$ is given by \cite{tse2005fundamentals}
\begin{align}
	\mbf H_{t} = \sum_{l=1}^{L_t} \alpha_{t,l} \sqrt{\rho_{t,l}} \mbf a_N(\tau_{t,l}) \left(  \mbf a_{M_1}(\theta_{t,l}) \otimes \mbf a_{M_2}(\phi_{t,l}) \right)^H, \label{H_t}
\end{align}
where $\rho_{t,l}$ represents the $l$-th path power; $\alpha_{t,l},\forall l$ represent the i.i.d. complex random coefficients with zero means and unit powers; $\mbf a_{x}(\omega), \omega \in \{\tau_{t,l},\theta_{t,l},\phi_{t,l} \}$ represents the steering vector as 
\begin{align}
	\mbf a_{x}(\omega) \triangleq  \frac{1}{\sqrt{x}} [ 1, e^{-i \omega }, ... , e^{-i  (x-1)  \omega } ]^T, \notag
\end{align}
where $i^2=-1$. In 5g cellular networks, neighbor cells share the frequency spectrum to improve network throughput \cite{TS38213}. On the other hand, the ICI becomes the bottleneck of high spectrum efficiency. In the $t$-th time-slot, denote by $k$ the interference index with $k \in \mal{K}_t = \{1,2,....,K_t\}$. Similar to \eqref{H_t}, the channel model of the $k$-th interferer is given by  
\begin{align}
	\mbf H_{t,k}^I = \sum_{l=1}^{L_k^I} \alpha_{t,k,l}^I \sqrt{\rho_{t,k,l}^I} & \mbf a_N(\tau_{t,k,l}^I) \times \notag \\
	&  \left(  \mbf a_{M_1}(\theta_{t,k,l}^I) \otimes \mbf a_{M_2}(\phi_{t,k,l}^I) \right)^H, \label{H_I}
\end{align}
where $\alpha_{t,k,l}^I$, $\rho_{t,k,l}^I$, $\tau_{t,k,l}^I$, $\theta_{t,k,l}^I$, and $\phi_{t,k,l}^I$ represent the $l$-th complex coefficient, path power, normalized delay, directional components in the azimuth and zenith directions, respectively. Denote by $\mbf X_t \in \mbb{C}^{N \times N}$ and $\mbf X_{t,k}^I \in \mbb{C}^{N \times N}$ the diagonal pilot matrices of the user and the $k$-th interferer, respectively. The diagonal elements of $\mbf X_t$ and $\mbf X_{t,k}^I$ are denoted by $\mbf x_{t}$ and $\mbf x_{t,k}^I$, respectively. $\mbf x_{t}$ and $\mbf x_{t,k}^I$ are adopted as constant envelope signals with powers being $P$ and $P^I$, respectively. Assume that the length of the cyclic prefix (CP) is greater than the maximum channel delay spread. After removing the CP and applying the discrete Fourier transform (DFT), the frequency-space-domain received signal is expressed as 
\begin{align}
	& \mbf Y_t = \mbf X_t \mbf H_t + \sum_{k}  \mbf X_{t,k}^I \mbf H_{t,k}^I + \mbf N_t, \label{Y_t}
\end{align}
or equivalently
\begin{align}
	& \mbf y_t = \sum_l \alpha_{t,l} \sqrt{\rho_{t,l}} \mbf b({\theta}_{t,l},{\phi}_{t,l})\otimes{\mbf s}_{t} ({\tau}_{t,l})  + \notag \\
	& \hspace{20pt} \sum_{k,l} \alpha_{t,k,l}^I \sqrt{\rho_{t,k,l}^I} \mbf b(\theta_{t,k,l}^I,\phi_{t,k,l}^I)\otimes{\mbf s}_{t,k}^I ({\tau}_{t,k,l}^I) + \mbf n_t, \label{Y_t2}
\end{align}
where $\mbf N_t$ is an additive white Gaussian noise (AWGN) matrix with the elements independently drawn from $\mal{CN}(0,\gamma)$. \eqref{Y_t2} is obtained by substituting \eqref{H_t} and \eqref{H_I} into \eqref{Y_t} and vectorizing $\mbf Y_t$. Specifically, \eqref{Y_t} and \eqref{Y_t2} are associated by noting $\mbf y_t = \mrm{vec}(\mbf Y_t)$, $\mbf n_t = \mrm{vec}(\mbf N_t)$, ${\mbf s}_{t}({\tau}_{t,l}) = \mbf x_{t} \odot \mbf a_N({\tau}_{t,l})$, $\mbf b(\theta_{t,l},\phi_{t,l}) =   \mbf a_{M_1}(\theta_{t,l}) \otimes \mbf a_{M_2}(\phi_{t,l}) $, ${\mbf s}_{t,k}^I ({\tau}_{t,k,l}) = \mbf x_{t,k}^I \odot \mbf a_N({\tau}_{t,l})$, and $\mbf b({\theta}_{t,k,l}^I,{\phi}_{t,k,l}^I) = \mbf a_{M_1}(\theta_{t,k,l}^I) \otimes \mbf a_{M_2}(\phi_{t,k,l}^I)$. In the remainder of this paper, we use model \eqref{Y_t} for channel estimation and use model \eqref{Y_t2} for CKM construction. To enhance the channel estimation performance, it is crucial to exploit the prior knowledge of $\mbf H_t$. To this end, we propose a CKM to provide strong priors, and the construction of CKM is described in Sec. III-IV. The interference cancellation is also essential for ensuring estimation performance, especially at a relatively low SINR. In Sec. V, we use the CKM outputs to assist the estimation of interference spatial covariance.

% Signal model under interference
% \begin{align}
% 	\mbf Y_t = \sum_l \alpha_{l,t} \tilde{\mbf a}_N(\tau_l) \left( \mbf W \left(  \mbf a_{M_1}(\theta_{1,l}) \otimes \mbf a_{M_2}(\theta_{2,l})  \right)  \right)^T + \sum_k \mbf X_k \mbf F_f \mbf S_k \mbf F_s^T \mbf W^T +  \mbf N_t.
% \end{align}
% We aim to extract $\tau_l$, $\theta_{1,l}$, $\theta_{2,l}$, and the average power of $\{\alpha_{l,t}\}_{l=1}^{L}$. We are agnostic to the model parameters of the interference channel and are primarily concerned with effectively mitigating strong interference components. Thus, we employ a basis transformation matrix under uniform sampling, with the aim of attenuating high-amplitude elements in the transformed domain.

% or
% \begin{align}
% 	\mbf Y = \sum_l (\mbf x_l \odot \mbf a_N(\tau_l)) \otimes \left( \mbf W \left(  \mbf a_{M_1}(\theta_{1,l}) \otimes \mbf a_{M_2}(\theta_{2,l})  \right)  \right) \bsm{\alpha}^T + \mbf N
% \end{align}

\section{Framework of Proposed CKM}

\subsection{Basic Assumptions}
With the rapid development of ISAC, user localization can be efficiently achieved in communicates systems. Against this background, the CKM is proposed as a mapping from the user location to some channel statistics during the transmission time-slots of interest\cite{CKM_tutorial}. A common assumption in CKM is that the propagation environment is quasi-static, that is, the propagation environment remains unchanged in a relatively large timescale compared to the period of signal transmission. To see the reasonableness of this assumption, an example is that the transmission interval of an OFDM symbol is $33.3$ $\mu$s under $\Delta_f = 30 $ KHz \cite{TS38211}, while the variations in propagation environment, e.g., building deformations, occur over the span of days. Also, note that the BS is typically static after deployment. Given the static BS and quasi-static propagation environment, the channel statistics can be regarded as a vector-valued function of the user localization.

\begin{figure}[h] 
    \centering
    \includegraphics[width = 3.5 in]{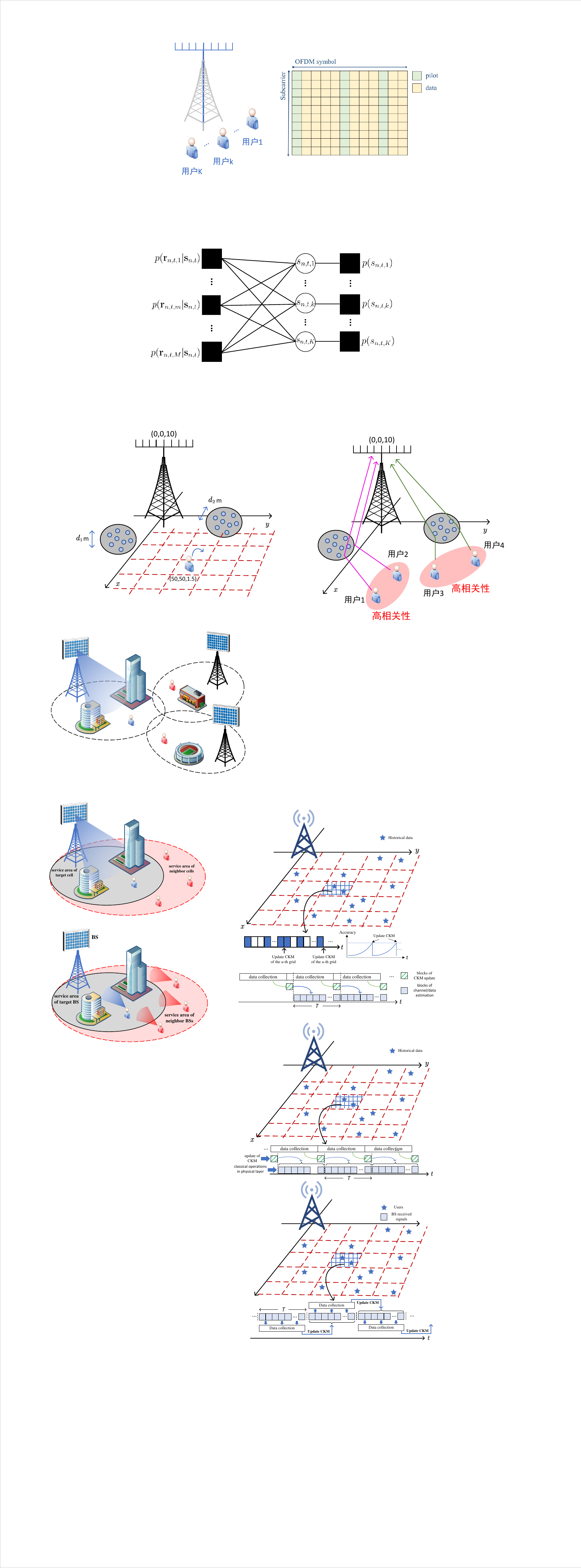}
    \caption{An illustration of the CKM construction framework.}
    \label{CKM_ill}
\end{figure}

Consider that the physical region is divided into $Q$ grids, and the CKM is periodically updated with period $T$. The process of updating the CKM per period follows the same procedure. Without loss of generality, we introduce the framework of CKM construction in one period. Define a mapping $\mal{L}(\cdot)$ from the time-slot $t$ to a location $q$ as
\begin{align}
 \mal{L}(\cdot): \mbb{T} \rightarrow \mbb{Q}, \quad q = \mal{L}(t), \label{q_t}
\end{align}
where $\mbb{T}=\{1,..., T\}$ and $\mbb{Q}=\{1,..., Q\}$ denote the set of the time-slots and the locations, respectively. The location $q$ corresponds to a physical grid of $d \times d ~ \mrm{m}^2$. Note that $\mal{L}(\cdot)$ is a many-to-one mapping, i.e., multiple time-slots associate with a location $q$. For the CKM construction, some assumptions are made below:
\begin{assumption}
 The CKM is constructed based on the following assumptions: 
    \begin{itemize}
        \item[(a)] At any location $q$, the changes of channel parameters $\rho_{t,l}$, $\tau_{t,l}$, $\theta_{t,l}$ and $\phi_{t,l}$ over $2T$ continuous time-slots are negligible.
        \item[(b)] $\mal{L}(\cdot)$ is obtained by the BS per time-slot.
        \item[(c)] Interference pilots $\{\mbf X_{k,t}^I\}_{\forall k,\forall t}$ are known by the BS at the beginning of the CKM update per period.
    \end{itemize}
\end{assumption} 
Assumption 1(a) implies that the quasi-static characteristics of the propagation environment hold within $2T$ time-slots. In practice, the value of $T$ should be appropriately set according to the rate of changes in the propagation environment (e.g., the changes in building shape, weather, etc). Assumption 1(b) can be satisfied by using various ISAC techniques or through cooperation with other positioning systems (e.g., the GPS system). As for Assumption 1(c), the neighbor-cell BSs can store the pilot usage information $\mbf X_{k,t}^I$ and share this information with the target BS. The signaling overhead from this information sharing is marginal since the update period $T$ is relatively long. Besides, we can adopt pseudo-random sequences as pilots, where only random seeds need to be shared.

% Furthermore, $T$ can be adaptively adjusted to cope with some abrupt changes in the propagation environment. For example, when the weather transitions from cloudy to rainy, it is necessary to use a short $T$ during the transition phase. 

We provide a diagram of the CKM construction in Fig. \ref{CKM_ill}.
% we associate $\mbf y_t$ in \eqref{Y_t2} with a location $q$.
For ease of description, define $\mbb{T}_q = \{ t |  q=\mal{L}(t) \}$ as the set of time-slots corresponding to location $q$. Within  $T$ time-slots, the received signals $\mbf y_t,t\in \mbb{T}_q$ are collected as measurement data, where the measurement data can be reference signals, synchronization signals, pilot signals, etc from all users at the $q$-th location. Then, $\mbf y_t,t\in \mbb{T}_q$ is used to construct the CKM, and the CKM outputs are exploited to assist the wireless communications in the following $T$ time-slots. 
% Recall that a BS can serve hundreds of users at various locations by TDD. Therefore, we can employ the historical data from not only the target user itself bus also other users in the same location $q$. 

\vspace{-0.3cm}

\subsection{Proposed CKM}
We now provide a specific description of the proposed CKM. From \eqref{H_t}, $\mbf H_t, \forall t \in \mbb{T}_q$ is determined by the parameters $\alpha_{l,t}$, $\rho_{l,t}$, $\tau_{l,t}$, $\theta_{l,t}$, and $\phi_{l,t}$. Among these parameters, $\alpha_{l,t}$ is a small-scale fading coefficient with the phase variation periodic to the wavelength, which is difficult to characterize. $\rho_{l,t}$, $\tau_{l,t}$, $\theta_{l,t}$, and $\phi_{l,t}$ are large-scale fading parameters and exhibit spatial consistency \cite{TR38901}. For example, consider a user-BS distance of $300$ meters. When the user moves $5$ meters perpendicular to the user-BS line, the delay and angle drifts of the line-of-sight (LOS) path are $0.14$ ns and $0.19$ degrees, respectively. This example indicates that large-scale fading parameters approximately remain constants in a physical region with a relatively small $d$. Based on the spatial consistency, we replace $\{\rho_{l,t},\tau_{l,t},\theta_{l,t},\phi_{l,t}\}_{l=1}^L$ by location-based parameter set as $\{ \bar{\tau}_{q,l},\bar{\theta}_{q,l},\bar{\phi}_{q,l},\bar{\rho}_{q,l} \}_{l=1}^{\bar{L}}$, where $\bar{L}$ represents the number of effective channel paths. The corresponding CKM is expressed as 
\begin{align}
 \mal{C}(\cdot) : q \rightarrow \{ \bar{\tau}_{q,l},\bar{\theta}_{q,l},\bar{\phi}_{q,l},\bar{\rho}_{q,l} \}_{l=1}^{\bar{L}}. \label{CKM}
\end{align}
Given \eqref{CKM}, the user channel $\mbf H_t,t \in \mbb{T}_q$ in \eqref{H_t} is represented as
\begin{align}
 \mbf H_t = \sum_{l=1}^{\bar{L}} \bar{\alpha}_{t,l} \sqrt{\bar{\rho}_{q,l}} \mbf a_N(\bar{\tau}_{q,l}) \mbf b(\bar{\theta}_{q,l},\bar{\phi}_{q,l})^H + \bsm{\Delta}_{\mbf H_t}, \label{apr_H_t}
\end{align} where $\bar{\alpha}_{t,l}$ is the $l$-th effective complex coefficient and $\bsm{\Delta}_{\mbf H_t}$ is the representation error. The value of $\bar{L}$ controls the trade-off between the representation error and the construction complexity. Note that $\bar{L}$ can be less than $L$ since an effective path can approximate two channel paths with similar delays and angles.  In \eqref{apr_H_t}, $\bar{\rho}_{q,l}$ is not only related to location $q$ but also related to the radiation pattern of the transceiver array \cite{balanis2016antenna}. Recall that we assume an omnidirectional antenna at the user, and the CKM is constructed specifically for each BS. Therefore, the impacts of radiation patterns on the user and the BS can be ignored in $\mal{C}(\cdot)$. As for path delays, they include the timing offset as a group delay added to $\bar{\tau}_{q,l},\forall l$. This group delay can be accurately calibrated \cite{Synchr}, and we focus on the extraction of path delays which are associated with propagation distances only.

%\begin{subequations}
%	\begin{align}
		% &  \bar{v}_{q,l} = v_{t,l} + \Delta_{v_{t,l}}, ~ \forall t \in \mbb{T}_q, 
		% \\
		% &  \bar{\theta}_{q,l} = \theta_{t,l} + \Delta_{\theta_{t,l}}, ~ \forall t \in \mbb{T}_q, \\ 
		% &  \bar{\phi}_{q,l} = \phi_{t,l} + \Delta_{\phi_{t,l}}, ~ \forall t \in \mbb{T}_q, \\ 
		% &  \bar{\rho}_{q,l} = \rho_{t,l} + \Delta_{\rho_{t,l}}, ~ \forall t \in \mbb{T}_q,
		
%	\end{align}	
%\end{subequations}
% $v \in \{\tau,\theta,\phi,\rho\}$ and 
%$\Delta_{v}$ is approximation error. 

With \eqref{apr_H_t}, we re-write \eqref{Y_t2} as a CKM-based signal model: 
\begin{align}
	\mbf y_t = & \sum_l \beta_{t,l} \mbf b(\bar{\theta}_{q,l},\bar{\phi}_{q,l})\otimes{\mbf s}_{t} (\bar{\tau}_{q,l}) \notag \\ 
	& + \sum_{k,l} \beta_{t,k,l}^I \mbf b(\theta_{t,k,l}^I,\phi_{t,k,l}^I)\otimes{\mbf s}_{t,k}^I ({\tau}_{t,k,l}^I) + \tilde{\mbf n}_t, \label{y_CKM}
\end{align} 
where $\beta_{t,l} = \bar{\alpha}_{t,l} \sqrt{\bar{\rho}_{q,l}} $; $\beta_{t,k,l}^I = \alpha_{t,k,l}^I \sqrt{{\rho}_{t,k,l}^I} $; $\tilde{\mbf n}_t$ is the addition of the AWGN and the representation error induced by $\bsm{\Delta}_{\mbf H_t}$. Note that $ \mbf y_t, \forall t \in \mbb{T}_q$ in \eqref{y_CKM} corresponds to the $q$-th location of the target user, and the second term in \eqref{y_CKM} consists of the signals from various interferes. For a performance-complexity balance in interference cancellation, we can cancel $\bar{L}_k^I < L_k^I$ path components which dominate the channel power of interferer $k$. In Sec. IV, we describe how to establish the CKM. Then, in Sec. V, we use the CKM outputs to assist a high-accuracy channel estimation. 
% We further show in Sec. IV that the CKM outputs can also assist the estimation of interference prior for interference cancellation. 

\begin{remark}
	The proposed CKM outputs $\{ \bar{\tau}_{q,l},\bar{\theta}_{q,l},\bar{\phi}_{q,l},\bar{\rho}_{q,l} \}_{l=1}^{\bar{L}} $ as location-specific priors which can be exploited for high-accuracy channel estimation. For comparison, the compressive-sensing (CS) methods \cite{OMP,AMP_2009,turbo_ma,STCS} adopt a dictionary which is typically constructed from the grid sampling of  $\mbf a_x(\omega),\omega \in\{\tau_{t,l},\theta_{t,l},\phi_{t,l}\}$. Based on this dictionary, the CS methods aim to find a sparse representation of the channel. However, a pre-determined dictionary induces the energy leakage due to the model mismatch between the discrete grid samples of $\mbf a_x(\omega)$ and the true $\mbf a_x(\omega)$ with continuous $\omega$. In CKM, the extraction of delays/angles is dictionary-free and regarded as a spectral estimation problem based on \eqref{y_CKM}. As for the neural-network (NN) methods \cite{Jinshi,Vincent,chenchen}, $\mbf H_t$ is treated as an image for correlation learning. To enhance generalization, various channel data (e.g., highly and weakly frequency-selective channels) are used to learn a common NN, potentially leading to a performance loss. In CKM, the output channel parameters reflect the channel characteristics in the specific propagation environment.
	% From the point of methodology, the CKM enables a channel estimation with higher accuracy, since its sparse prior or correlation prior is user-specific.
	
\end{remark}

\begin{remark}
	An innovation of the proposed CKM, compared to \cite{CKM1,EM_CKM}, is the consideration of practical interferences and the mechanism for interference cancellation. In practice, the interference strengths are unknown and vary due to dramatically different transmitting powers and path-loss. Then, for some interferences with severe path loss and low transmitting powers, the corresponding received signals are negligible. Consequently, we employ block sparsity to characterize the interference channels, as detailed in the following section. 
\end{remark}

\vspace{-0.3cm}

\section{Interference-Cancellation-Based Bayesian CKM Construction}

In Sec. IV-A, we establish a Bayesian inference framework for the CKM construction. To solve the Bayesian inference, we develop a message-passing principle in Sec. IV-B. Sec. IV-C provides the detailed derivations of message-passings. During message-passings, the parameters of prior probabilities are learned by expectation maximization (EM), as shown in Sec. IV-D. Sec. IV-E concludes the proposed algorithm. 
\vspace{-0.3cm}

\subsection{Problem Formulation}

From the data $\mbf y_t, t \in \mbb{T}_q$, we aim to cancel the interference signals and simultaneously extract $\bar{\tau}_{q,l}$, $\bar{\theta}_{q,l}$, $\bar{\phi}_{q,l}$, and $\bar{\rho}_{q,l}$ for the CKM construction\footnote{To reduce computational complexity, we can use a subset of $\{\mbf y_t| t \in \mbb{T}_q\}$ to construct the CKM, where the construction process is the same.}. To this end, we formulate the problem of CKM construction into a Bayesian inference problem. Note that the process of CKM construction per location $q$ is separate and follows the same procedure. Without loss of generality, we provide the problem formulation in the $q$-th  location below.  We first probabilistically model the channel parameters of the target user as
\begin{subequations}
    \label{p_usr}
    \begin{align}
        & \bar{v}_{q,l}  \sim \mal{VM}( \bar{v}_{q,l} ; \mu_{\bar{v}_{q,l}}^{\mrm{pri}}, \kappa_{\bar{v}_{q,l}}^{\mrm{pri}} ), \\
        & \beta_{t,l} \sim \mal{CN}( \beta_{t,l} ; 0, {\bar{\rho}}_{q,l} ), \label{p_rho}
    \end{align}
\end{subequations}
where $v \in \{\tau,\theta,\phi\}$; superscript ``$\mrm{pri}$'' is the abbreviation of ``prior''. In \eqref{p_rho}, we treat the path power ${\bar{\rho}}_{q,l}$ as the variance of $\beta_{t,l}$. The probability models of the interferences are given by
\begin{subequations}
    \label{p_I}
    \begin{align}
        & {v}_{t,k,l}^I  \sim \mal{VM}( {v}_{t,k,l}^I ; \mu_{{v}_{t,k,l}^I}^{\mrm{pri}}, \kappa_{{v}_{t,k,l}^I}^{\mrm{pri}} ), \\
        & \bsm{\beta}_{t,k}^I \sim (1-\lambda)\delta(\bsm{\beta}_{t,k}^I) + \lambda \mal{CN}(\bsm{\beta}_{t,k}^I; \mbf 0 , \mrm{diag}(\bsm{\rho}_{t,k}^I) ), \label{p_rho_I}
    \end{align}
\end{subequations}
where $v \in \{\tau,\theta,\phi\}$, $\bsm{\beta}_{t,k}^I = [\beta_{t,k,1}^I,...,\beta_{t,k,L_k^I}^I]^T$ and $\bsm{\rho}_{t,k}^I = [\rho_{t,k,1},...,\rho_{t,k,L_k^I}]^T$. In \eqref{p_rho_I}, the Bernoulli Gaussian pdf models that $\bsm{\beta}_{t,k}^I$ is a zero-vector with probability $1-\lambda$, which is used to characterize the potential severe path-loss of interferer $k$. Let $\bsm{\xi}_q$ be a vector of all elements of $\{ \bar{\tau}_{q,l},\bar{\theta}_{q,l},\bar{\phi}_{q,l},\beta_{t,l} \}_{t \in \mbb T_q,\forall l}$ with an arbitrary order. Similarly, let $\bsm{\xi}_t^I$ be a vector of all elements of $\{ {\tau}_{t,k,l}^I,{\theta}_{t,k,l}^I,{\phi}_{t,k,l}^I,\beta_{t,k,l}^I \}_{\forall k,\forall l}$. Given the observations $\{\mbf y_t | t \in \mbb T_q\}$, the posterior distribution of $\bsm{\xi}_q$ and $\{\bsm{\xi}_{t}^I\}_{t \in \mbb T_q}$ is expressed as 
\begin{align}
    & p( \bsm{\xi}_q , \{\bsm{\xi}_t^I\}_{t \in \mbb T_q} | \{ \mbf y_t \}_{t \in \mbb T_q} )   \notag \\ 
    & \propto \prod_{t} p(\mbf y_t | \bsm{\xi}_q,\bsm{\xi}_t^I) \prod_l p(\bar{\tau}_{q,l}) p(\bar{\theta}_{q,l}) p(\bar{\phi}_{q,l}) \prod_{t,l} p(\beta_{t,l}) \notag \\
    & \hspace{10pt} \prod_{t,k,l} p({\tau}_{t,k,l}^I) p({\theta}_{t,k,l}^I) p({\phi}_{t,k,l}^I) \prod_{t,k} p(\bsm{\beta}_{t,k}^I). \label{p_post}
\end{align}
Based on \eqref{p_post}, the Bayesian optimal minimum mean square error (MMSE) estimator requires high-dimensional integral which is computationally intractable. In the following section, we develop a message-passing algorithm to obtain an approximate solution.

\vspace{-0.3cm}
\subsection{Message-Passing Principle}
\begin{figure}[h] 
	\centering
	\includegraphics[width = 3.5 in]{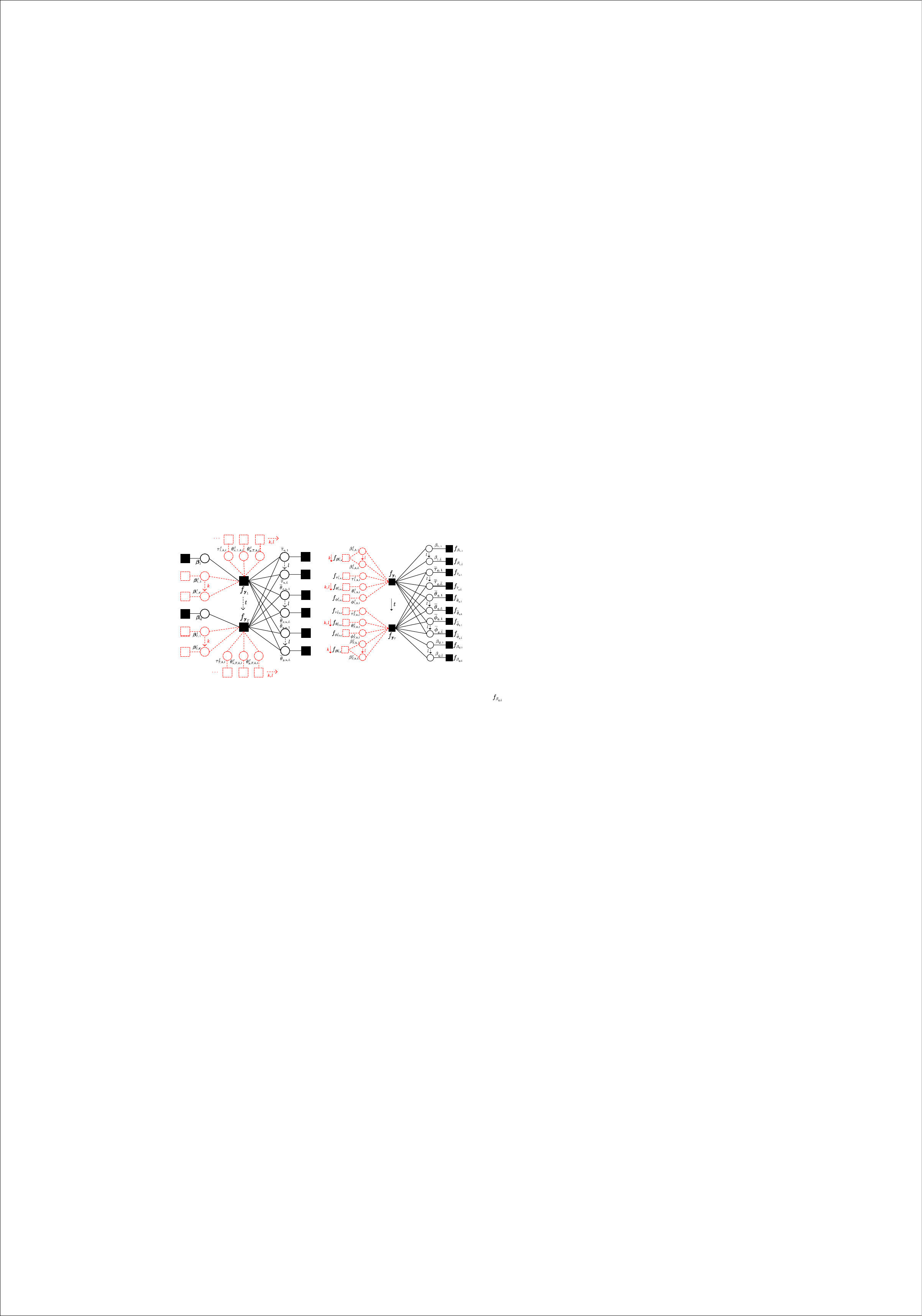}
	\caption{A factor graph representation of the factorization of posterior distribution \eqref{p_post}. $\{1,T\} \subseteq \mbb{T}_q$ in this example.}
	\label{FG}
\end{figure}

The Bethe method \cite{BP}, originated from nuclear physics, has been widely used to develop variational inference algorithms, known as \emph{message-passing}. The authors in \cite{Unify_1,Unify_2,HVMP} show that the Bethe framework under different constraints unifies three commonly used message-passing rules including belief propagation (BP) \cite{BP}, variational message passing (VMP) \cite{VMP}, and expectation propagation (EP) \cite{EP}. Inspired by \cite{BP,Unify_1,Unify_2,HVMP}, we employ the Bethe method for a variational approximation of the posterior distribution $p( \bsm{\xi}_q,\{\bsm{\xi}_t^I\} | \{\mbf y_t\})$. Denote by $b(\bsm{\xi}_q,\{\bsm{\xi}_t^I\})$ the surrogate pdf. We aim to find $b(\bsm{\xi}_q,\{\bsm{\xi}_t^I\})$ that minimizes the \emph{variational free energy} between $p( \bsm{\xi}_q,\{\bsm{\xi}_t^I\} | \{\mbf y_t\})$ and $b(\bsm{\xi}_q,\{\bsm{\xi}_t^I\})$:
\begin{align}
	{F}(b) \triangleq \mrm{KL} \big[ & b(\bsm{\xi}_q,\{\bsm{\xi}_t^I\}) || p( \bsm{\xi}_q,\{\bsm{\xi}_t^I\} | \{\mbf y_t\}) \big]  - \mrm{ln} Z, 
	\label{F_b}
\end{align}
where $\mrm{KL}[ \cdot || \cdot ]$ denotes the Kullback–Leibler divergence, and $ - \ln Z = - \ln p(\{\mbf y_t \})$ is termed as the \emph{Helmholtz free energy}. We next give a factorization of $b(\bsm{\xi}_q,\{\bsm{\xi}_t^I\})$. Specifically, Fig. \ref{FG} represents the probability factorization \eqref{p_post} with a factor graph, where the rectangles (termed as factor nodes) represent the probability distributions and the circles (termed as variable nodes) represent the random variables. For conciseness, we introduce the abbreviations of the pdfs, e.g., $p(\mbf y_t |\cdot)$ is abbreviated as $f_{\mbf y_t}$. For any variable node $v$, we introduce the surrogate pdf $b_{v}(v)$ to approximate the posterior pdf of $v$. For example, $b_{\bar{\tau}_{q,l}}(\bar{\tau}_{q,l})$ at variable node $\bar{\tau}_{q,l}$ is an approximation of $p(\bar{\tau}_{q,l}|\{\mbf y_t\})$. For any factor node $f$ and the linked variables represented by vector $\bsm u_f$, we introduce $b_{f}(\bsm u_f)$ to approximate the posterior pdf of $\bsm u_f$. These two types of surrogate pdfs are termed as \emph{beliefs} \cite{BP}, which corresponds to the probability factorization of $b(\bsm{\xi}_q,\{\bsm{\xi}_t^I\})$ as 
\begin{align}
	b(\bsm{\xi}_q,\{\bsm{\xi}_t^I\}) = {\prod_{f \in \mal{F}} b_f(\bsm u_f) } / \left({\prod_{\bsm{v}\in\mal{V}} [ b_{v}(v)]^{A_i-1}} \right), \label{b_fac}
\end{align}
where $A_i$ represents the number of factor nodes connected with $v$; $\mal{F}$ and $\mal{V}$ denote the sets of factor and variable nodes, respectively. $b_f(\bsm u_f)$ and $b_{v}(v)$ are required to fulfill the marginalization consistency:
\begin{align}
	b_{v}(v) = \int b_f(\bsm u_f) d \bsm u_f \backslash v \label{mar_cons}
\end{align} However, the high-dimensional integral in \eqref{mar_cons} probably leads to intractable computation. To avoid this issue, we relax the marginalization consistency to the matching of numerical characteristics. In specific, for the beliefs $b_f(\bsm u_f),f \in \{ f_{\beta_{t,l}},f_{\beta_{t,k,l}^I} \}$, we introduce the matching of the first- and second-moments as \cite{EP} 
\begin{subequations}
	\label{cons_1}
	\begin{align}
		& \mathbb{E}[v| b_{v}(v)] = \mathbb{E}\left[ v | \int b_f(\bsm u_f) d \bsm u_f \backslash v \right], \\
		& \mathbb{E}[ |v|^2 | b_{v}(v)] = \mathbb{E}\left[ |v|^2
		 | \int b_f(\bsm u_f) d \bsm u_f \backslash v \right].
	\end{align}	
\end{subequations} For $b_f(\bsm u_f),f \in \{ f_{\bar{\tau}_{q,l}},f_{\bar{\theta}_{q,l}},f_{\bar{\phi}_{q,l}},f_{{\tau}_{t,k,l}^I},f_{{\theta}_{t,k,l}^I},f_{{\phi}_{t,k,l}^I} \}$, we introduce the matching of mean direction as
\begin{subequations}
	\label{cons_2}
	\begin{align}
		& \mathbb{E}[\cos(v)| b_{v}(v)] = \mathbb{E}\left[ \cos(v) | \int b_f(\bsm u_f) d \bsm u_f \backslash v \right], \\
		& \mathbb{E}[ \sin(v) | b_{v}(v)] = \mathbb{E}\left[ \sin(v)
		 | \int b_f(\bsm u_f) d \bsm u_f \backslash v \right].
	\end{align}	
\end{subequations} For the belief $b_{f_{\mbf y_t}}(\bsm u_f)$, it associates with $f_{\mbf y_t}$ as a multi-linear model of $\beta_{t,l}$, $\beta_{t,k,l}^I$, $\mbf a_N(\cdot)$, $\mbf a_{M_1}(\cdot)$, and $\mbf a_{M_2}(\cdot)$. This makes the computation of matching condition involve prohibitively high complexity. Following \cite[Chap. 10.1.1]{bishop2006pattern}, we introduce a factorization of $b_{f_{\mbf y_t}}(\bsm u_f)$ as
\begin{align}
	b_{f_{\mbf y_t}}(\bsm u_f) = b_{f_{\mbf y_t}}(\bsm{\beta}_t^{uI}) \prod_{v \in \mal{N}_{\mbf y_t}  \backslash \bsm{\beta}_t^{uI}} b_{f_{\mbf y_t}}( v), \label{cons_3} 
\end{align} where $\bsm{\beta}_t^{uI} = [\bsm{\beta}_{t,1},...,\bsm{\beta}_{t,\bar{L}},(\bsm{\beta}_{t,1}^I)^T,...,(\bsm{\beta}_{t,K}^I)^T]^T$; $\mal{N}_{\mbf y_t} = \{\bsm{\beta}_t^{uI},\bar{\tau}_{q,l},\bar{\theta}_{q,l},\bar{\phi}_{q,l},{\tau}_{t,k,l}^I,{\theta}_{t,k,l}^I,{\phi}_{t,k,l}^I \}$. With the factorization \eqref{cons_3}, the original marginalization consistency for $b_{f_{\mbf y_t}}(\bsm u_f)$ reduces to that for $b_{f_{\mbf y_t}}(\bsm{\beta}_t^{uI})$ and $b_{f_{\mbf y_t}}(v),v \in \mal{N}_{\mbf y_t} \backslash \bsm{\beta}_t^{uI}$. Although $b_{f_{\mbf y_t}}(\bsm{\beta}_t^{uI})$ is a joint distribution, we demonstrate below that $b_{f_{\mbf y_t}}(\bsm{\beta}_t^{uI})$ follows a Gaussian distribution. Then, satisfying the marginal consistency for $b_{f_{\mbf y_t}}(\bsm{\beta}_t^{uI})$ reduces to matching the mean and the covariance.

% with the marginalization consistency $b_{v}(v) = b_{f_{\mbf y_t},v}(v)$. 
Given \eqref{b_fac}, \eqref{cons_1}-\eqref{cons_3}, the stationary points of the minimization problem of \eqref{F_b} satisfy the fixed-point equations as 
\begin{subequations}
	\label{belief}
	\begin{align}
		& b_{v}(v) \propto \prod_{f \in \mal{N}(v)} \mal{M}_{ {f} \to v}(v),
		\label{b_v} \\
		& b_{f}(\bsm u_f) \propto f(\bsm u_f)  \prod_{v \in \mal{N}(f)} \mal{M}_{v \to f}(v), f \in \mal{F} \backslash f_{\mbf y_t}, \label{b_f} \\
		& b_{f_{\mbf y_t}}(v) \propto e^{\int b_{f_{\mbf y_t}}(\bsm{\beta}_t^{uI}) \prod_{\tilde{v} \in \mal{N}_{f_{\mbf y_t}} \backslash \{\bsm{\beta}_t^{uI},v \}} b_{f_{\mbf y_t}}(\tilde{v}) \ln {f_{\mbf y_t}}(\bsm u_f) d \bsm u_f \backslash v} \notag \\ 
		& \hspace{40pt} \times \mal{M}_{v \to f_{\mbf y_t}}(v), v \in \mal{N}_{\mbf y_t} \backslash \bsm{\beta}_t^{uI}, \label{b_y_v}  \\
		& b_{f_{\mbf y_t}}(\bsm{\beta}_t^{uI}) \propto e^{\int \prod_{\tilde{v} \in \mal{N}(f_{\mbf y_t}) \backslash \bsm{\beta}_t^{uI}} b_{f_{\mbf y_t}}(\tilde{v}) \ln {f_{\mbf y_t}}(\bsm u_f) d \bsm u_f \backslash \bsm{\beta}_t^{uI}} \notag \\
		& \hspace{52pt} \times \prod_{v \in \bsm{\beta}_t^{uI}} \mal{M}_{v \to f_{\mbf y_t}}(v), \label{b_y_beta}
	\end{align}
\end{subequations}
with
\begin{subequations}
\label{mg_rule}
\begin{align}
	& \mal{M}_{v \to f}(v) \propto \prod_{\tilde{f} \in \mal{N}(v) \backslash f } \mal{M}_{ \tilde{f} \to v} (v), \label{mp_a} \\
	&  \mal{M}_{f \to v} (v) \propto {\mal{P} \Big(  \int f(\bsm u_f) \prod_{\tilde{v} \in \mal{N}(f)} \mal{M}_{\tilde{v} \to f}(\tilde{v})  d \bsm u_f \backslash v \Big)} \notag \\
	& \hspace{2cm} / \mal{M}_{v \to f}(v), f \in \mal{F} \backslash f_{\mbf y_t}, \label{mp_b} \\
	& \mal{M}_{f_{\mbf y_t} \to v} (v) \propto b_{f_{\mbf y_t}}(v) / \mal{M}_{v \to f_{\mbf y_t}}(v), v \in \mal{N}_{\mbf y_t} \backslash \bsm{\beta}_t^{uI}, \label{mp_c} \\
	& \mal{M}_{f_{\mbf y_t} \to v} (v) \propto \left( \int b_{f_{\mbf y_t}}(\bsm{\beta}_t^{uI}) d \bsm{\beta}_t^{uI} / v \right) / \mal{M}_{v \to f_{\mbf y_t}}(v) , \notag \\
	& \hspace{6.2cm} v \in \bsm{\beta}_t^{uI}, \label{mp_d} 
\end{align}	
\end{subequations}
where $\mal{P}(b(\cdot))$ denotes the projection of $b(\cdot)$ to a Gaussian or V-M pdf satisfying \eqref{cons_1} or \eqref{cons_2}; $\mal{N}(f)$ is the set of variables associated with the factor node $f$; $\mal{N}(f) \backslash v$ denotes the set $\mal{N}(f)$ excluding the element $v$; $\mal{N}(v)$ is the set of factor nodes associated with the variable node $v$; $\mal{N}(v) \backslash f$ denotes the set $\mal{N}(v)$ excluding the factor node $f$; $\mal{M}_{v \to f}(v)$ (or $\mal{M}_{f \to v} (v)$) is termed as the \emph{message} from the variable node $v$ to the factor node $f$ (or from the factor node $f$ to the variable node $v$). 

In \eqref{mp_a}, $\mal{M}_{v \to f_{\mbf y_t}}(v),v \in \bsm{\beta}_t^{uI}$ is Gaussian due to the Gaussian projector $\mal{P}(\cdot)$ in \eqref{mp_b}. In \eqref{b_y_beta}, $\mrm{exp}\Big(\int \prod_{\tilde{v} \in \mal{N}(f_{\mbf y_t}) \backslash \bsm{\beta}_t^{uI}} b_{f_{\mbf y_t}}(\tilde{v}) \ln {f_{\mbf y_t}}(\bsm u_f) d \bsm u_f \backslash \bsm{\beta}_t^{uI} \Big)$ is also Gaussian, since $\ln {f_{\mbf y_t}}(\bsm u_f)$ is quadratic w.r.t. $\bsm{\beta}_t^{uI}$ from \eqref{y_CKM}. Then, based on the fact that the product of Gaussian pdfs is still a Gaussian pdf (up to a scaling constant), $b_{f_{\mbf y_t}}(\bsm{\beta}_t^{uI})$ in \eqref{b_y_beta} is Gaussian.

The proof of \eqref{b_v}-\eqref{mp_c} mainly follows \cite[Appendix A]{HVMP} and is omitted due to limited space. In \eqref{b_v}-\eqref{mp_c}, the messages related to $\beta_{t,l}$ are projected to Gaussian pdf, which is widely used in the existing message-passing algorithm \cite{EP,Unify_2,HVMP}. The intuition is that for the power-constraint variable $\beta_{t,l}$, the Gaussian distribution is the maximum entropy distribution. In other words, for a random variable with a power constraint, Gaussian pdf is the best choice in the principle of maximum entropy \cite{guiasu1985principle}. Similarly, we introduce the projection to V-M pdf for the messages related to circular variables $\{\bar{\tau}_{q,l},\bar{\theta}_{q,l},\bar{\phi}_{q,l},{\tau}_{t,k,l}^I,{\theta}_{t,k,l}^I,{\phi}_{t,k,l}^I \}$, since the V-M pdf is the maximum entropy distribution for the circular variable given mean direction and mean resultant length \cite[Chap. 3]{mardia2009directional}.  

% Compared to the message-passing in \cite{HVMP}, a modification is made to better solve the problem of CKM construction. Specifically, we introduce the matching condition of mean-direction \eqref{cons_2} for circular variables $\{\bar{\tau}_{q,l},\bar{\theta}_{q,l},\bar{\phi}_{q,l},{\tau}_{t,k,l}^I,{\theta}_{t,k,l}^I,{\phi}_{t,k,l}^I \}$, which make the corresponding messages follow V-M pdf. The V-M pdf is the maximum entropy distribution (with maximum information uncertainty) for circular variable with given mean direction and mean resultant length \cite[Chap. 3]{mardia2009directional}. The projection to a V-M distribution in (19) enhances the robustness of message passing.

% Second, we maintain the marginalization consistency for the factor node $f_{\mbf y_t}$.  

\subsection{Detailed Derivation of Messages}

From \eqref{b_v} and \eqref{b_f}, $b_{v}(v)$ and $b_{f}(\bsm u_f)$ are determined given $\mal{M}_{ {f} \to v}(v)$ and $\mal{M}_{v \to f}(v)$, respectively. From \eqref{mp_a}, $\mal{M}_{v \to f}(v)$ is determined given $\mal{M}_{ {f} \to v}(v)$. Then, it suffices to update $\mal{M}_{ {f} \to v} (v)$ and $b_{f_{\mbf y_t}}(v)$ in each message-passing iteration. We next present the detailed operations following the message-passing rule \eqref{belief}-\eqref{mg_rule}.

\subsubsection{Messages related to $\bar{\tau}_{q,l}$, $\bar{\theta}_{q,l}$, $\bar{\phi}_{q,l}$, ${\tau}_{t,k,l}^I$, ${\theta}_{t,k,l}^I$, and ${\theta}_{t,k,l}^I$}

According to \eqref{mp_c}, we update $\mal{M}_{f_{\mbf y_t} \to \bar{\tau}_{q,l}} (\bar{\tau}_{q,l})$ as
\begin{align}
	& \mal{M}_{f_{\mbf y_t} \to \bar{\tau}_{q,l}} (\bar{\tau}_{q,l}) \notag \\ 
	& \propto \mrm{exp}\left( \int \prod_{v \in \mal{N}(f_{{\mbf y}_t}) \backslash \bar{\tau}_{q,l}} 
	b_{f_{\mbf y_t}}(v) \ln {f_{\mbf y_t}}(\bsm u_{f_{\mbf y_t}}) d \bsm u_{f_{\mbf y_t}} \backslash \bar{\tau}_{q,l} \right). \label{M_y_to_tau}
\end{align}
In \eqref{M_y_to_tau}, $b_{f_{\mbf y_t}} (v),v \in \{\bar{\tau}_{q,l}, \bar{\theta}_{q,l},\bar{\phi}_{q,l},\tau_{t,k,l}^I,\theta_{t,k,l}^I,\phi_{t,k,l}^I \} $ are shown to be V-M pdfs with mean directions ${\mu}_v$ and concentrations $\kappa_v$ in the remainder of this subsection.
% To compute $\mal{M}_{f_{\mbf y_t} \to \bar{\tau}_{q,l}} (\bar{\tau}_{q,l})$ in \eqref{M_y_to_tau}, $b_{f_{\mbf y_t}}(v)$ is required for $v \in \{\bar{\tau}_{q,l}, \bar{\theta}_{q,l},\bar{\phi}_{q,l},\tau_{t,k,l}^I,\theta_{t,k,l}^I,\phi_{t,k,l}^I,\bsm{\beta}_{t}^{uI} \}$. We show later that $b_{f_{\mbf y_t}} (v)$ for $v \in \{\bar{\tau}_{q,l}, \bar{\theta}_{q,l},\bar{\phi}_{q,l},\tau_{t,k,l}^I,\theta_{t,k,l}^I,\phi_{t,k,l}^I \} $ are approximated by V-M pdfs with mean directions ${\mu}_v$ and concentrations $\kappa_v$. 
Besides, we show that $b_{f_{\mbf y_t}} (\bsm{\beta}_t^{uI})$ is Gaussian with mean $\bsm{\mu}_{{\bsm{\beta}}_t^{uI}} = [\mu_{\beta_{t,1}},...,\mu_{\beta_{t,L}},\bsm{\mu}_{\bsm{\beta}_{t,1}^I}^T,...,\bsm{\mu}_{\bsm{\beta}_{t,K}^I}^T]^T$ in \eqref{V_b_fy_beta}. By substituting $b_{f_{\mbf y_t}}(v),v \in \mal{N}(f_{{\mbf y}_t}) \backslash \bar{\tau}_{q,l}$ and $b_{f_{\mbf y_t}} (\bsm{\beta}_t^{uI})$ into \eqref{M_y_to_tau} and conducting integral operations, we obtain 
\begin{align}
	& \mal{M}_{f_{\mbf y_t} \to \bar{\tau}_{q,l}} (\bar{\tau}_{q,l}) \propto \mrm{exp}\Bigg( \frac{2}{\gamma} \Big( \operatorname{Re} \Big \{  \big( \mbf R_t^T \hat{\mbf b}_l^* 
	\! - \! \sum_{j \neq l} \mu_{\beta_{t,j}} \hat{\mbf b}_l^H \hat{\mbf b}_j \hat{\mbf s}_{t,j}\notag \\
	& \hspace{1.1cm} - \! \sum_{k,j} \mu_{\beta_{t,k,j}^I} \hat{\mbf b}_l^H \hat{\mbf b}_{t,k,j}^I \hat{\mbf s}_{t,k,j}^I   \big)^H 
	\mu_{\beta_{t,l}} \mbf s_t(\bar{\tau}_{q,l}) \Big \} \Big) \! \Bigg), \label{M_y_to_tau2}
\end{align}
where $\mbf R_t \in \mbb{C}^{M \times N}$ is the matrixization of $\mbf y_t \in \mbb{C}^{MN}$, 
$\hat{\mbf b}_l =  \mbb E_{b_{f_{\mbf y_t}}(\bar{\theta}_{q,l})} [ \mbf a_{M_1}(\bar{\theta}_{q,l})] \otimes \mbb E_{b_{f_{\mbf y_t}}(\bar{\phi}_{q,l})} [\mbf a_{M_2}(\bar{\phi}_{q,l})] $, 
$\hat{\mbf b}_{t,k,l}^I =  \mbb E_{b_{f_{\mbf y_t}}(\theta_{t,k,l}^I)} [ \mbf a_{M_1}(\theta_{t,k,l}^I)] \otimes \mbb E_{b_{f_{\mbf y_t}}(\phi_{t,k,l}^I)} [\mbf a_{M_2}(\phi_{t,k,l}^I)] $, 
$\hat{\mbf s}_{t,l} = \mbf x_{t,l} \odot \mbb E_{b_{f_{\mbf y_t}(\bar{\tau}_{q,l})}} [\mbf a_{N}(\bar{\tau}_{q,l}) ]$, 
and $\hat{\mbf s}_{t,k,l}^I = \mbf x_{t,k}^I \odot \mbb E_{b_{f_{\mbf y_t}(\tau_{t,k,l}^I)}} [\mbf a_{N}(\tau_{t,l,k}^I) ]$. 
Note that for a V-M variable $\omega \sim \mal{VM}(\mu,\kappa)$, we have $\mbb E_{\omega}[\mbf a_x(\omega)] = (I_0(\kappa))^{-1}[e^{i0\mu}I_0(\kappa),...,e^{i(x-1)\mu}I_{x-1}(\kappa) ]^T$ \cite{mardia2009directional}, where $I_x(\cdot)$ is the modified Bessel function of the first kind and order $x$.
With $\mal{M}_{f_{\mbf y_t} \to \bar{\tau}_{q,l}} (\bar{\tau}_{q,l})$, we apply \eqref{mp_a} to obtain  $\mal{M}_{\bar{\tau}_{q,l} \to f_{\bar{\tau}_{q,l}}}(\bar{\tau}_{q,l}) \propto \prod_{t} \mal{M}_{f_{\mbf y_t} \to \bar{\tau}_{q,l}} (\bar{\tau}_{q,l})$. Then, by using \eqref{mp_b}, the message from $f_{\bar{\tau}_{q,l}}$ to $\bar{\tau}_{q,l}$ is expressed as 
\begin{align}
	\mal{M}_{f_{\bar{\tau}_{q,l}} \to \bar{\tau}_{q,l}}(\bar{\tau}_{q,l}) \propto \frac{\mal{P} \Big( p(\bar{\tau}_{q,l}) \prod_{t} \mal{M}_{f_{\mbf y_t} \to \bar{\tau}_{q,l}}(\bar{\tau}_{q,l}) \Big)}{\prod_{t} \mal{M}_{f_{\mbf y_t} \to \bar{\tau}_{q,l}} (\bar{\tau}_{q,l})}. \label{M_ftau_to_tau}
\end{align}
Note that the complicated form of $\prod_{t} \mal{M}_{f_{\mbf y_t} \to \bar{\tau}_{q,l}} (\bar{\tau}_{q,l})$ makes $\mal P(\cdot)$ in \eqref{M_ftau_to_tau} have no analytical solution. For simplification, we approximate $\prod_t \mal{M}_{f_{\mbf y_t} \to \bar{\tau}_{q,l}}(\bar{\tau}_{q,l})$ as a V-M pdf. Specifically, we adopt the method in \cite[Sec. IV-D]{VALSE} to search a local maximum of $\prod_t \mal{M}_{f_{\mbf y_t} \to \bar{\tau}_{q,l}}(\bar{\tau}_{q,l})$, denoted by $\hat{\tau}_{q,l}$. We adopt a second-order Taylor expansion of $f(\bar{\tau}_{q,l}) \triangleq \ln(\prod_t \mal{M}_{f_{\mbf y_t} \to \bar{\tau}_{q,l}})$ at $\bar{\tau}_{q,l} = \hat{\tau}_{q,l}$, which is quadratic w.r.t. $\bar{\tau}_{q,l}$, resulting in a Gaussian approximation of $\prod_t \mal{M}_{f_{\mbf y_t} \to \bar{\tau}_{q,l}}$. Then, by using the similarity between the Gaussian pdf and the V-M pdf \cite[Sec. 3.5]{mardia2009directional}, we obtain
\begin{align}
	\prod_t \mal{M}_{f_{\mbf y_t} \to \bar{\tau}_{q,l}}(\bar{\tau}_{q,l}) \approx \mal{VM}(\bar{\tau}_{q,l} ; \mu_{\bar{\tau}_{q,l} \to f_{\bar{\tau}_{q,l}}},\kappa_{\bar{\tau}_{q,l} \to f_{\bar{\tau}_{q,l}}}), \label{VM_fy_to_tau}
\end{align}
with
\begin{subequations}
	\label{M_ka_fy_to_tau}
	\begin{align}
		& \mu_{\bar{\tau}_{q,l} \to f_{\bar{\tau}_{q,l}}} = \hat{\tau}_{q,l} - f'(\hat{\tau}_{q,l})/f''(\hat{\tau}_{q,l}), \\
		& \kappa_{\bar{\tau}_{q,l} \to f_{\bar{\tau}_{q,l}}} = \mal{A}^{-1}(e^{-0.5 f''(\hat{\tau}_{q,l})}),
	\end{align}
\end{subequations}
where $\mal{A}^{-1}(\cdot)$ represents the inverse of the function $\mal{A}(\cdot) = I_1(\cdot)/I_0(\cdot)$. By substituting \eqref{VM_fy_to_tau} into \eqref{M_ftau_to_tau} and noting that the product of two V-M pdfs is still a V-M pdf, we obtain 
\begin{align}
	\mal{M}_{f_{\bar{\tau}_{q,l}} \to \bar{\tau}_{q,l}}(\bar{\tau}_{q,l}) =  p(\bar{\tau}_{q,l}) = \mal{VM}(\bar{\tau}_{q,l};\mu_{\bar{\tau}_{q,l}}^{\mrm{pri}},\kappa_{\bar{\tau}_{q,l}}^{\mrm{pri}} ). \label{M_ftau_to_tau2}
\end{align}
With \eqref{VM_fy_to_tau} and \eqref{M_ftau_to_tau2}, we express the belief $b_{f_{\mbf y_t}}(\bar{\tau}_{q,l})$ as 
% the belief at variable node $\bar{\tau}_{q,l}$ is expressed as
\begin{subequations}
	\label{26}
	\begin{align}
		b_{f_{\mbf y_t}}(\bar{\tau}_{q,l}) & \propto \mal{M}_{f_{\mbf y_t} \to \bar{\tau}_{q,l}} \mal{M}_{\bar{\tau}_{q,l} \to f_{\mbf y_t}}, \label{b_tau1}\\
		& \propto \mal{M}_{f_{\mbf y_t} \to \bar{\tau}_{q,l}} \prod_{j \neq t} \mal{M}_{f_{\mbf y_j} \to \bar{\tau}_{q,l}} \mal{M}_{f_{\bar{\tau}_{q,l}} \to \bar{\tau}_{q,l}}, \label{b_tau2}\\
		& \propto \mal{VM}(\bar{\tau}_{q,l} ; \mu_{\bar{\tau}_{q,l} \to f_{\bar{\tau}_{q,l}}},\kappa_{\bar{\tau}_{q,l} \to f_{\bar{\tau}_{q,l}}} ) p(\bar{\tau}_{q,l}), \label{b_tau3} \\
		& \propto \mal{VM}(\bar{\tau}_{q,l};\mu_{\bar{\tau}_{q,l}},\kappa_{\bar{\tau}_{q,l}} ), \label{b_tau}
	\end{align}
\end{subequations}
with $\kappa_{\bar{\tau}_{q,l}} =  | \mu_{\bar{\tau}_{q,l} \to f_{\bar{\tau}_{q,l}}} e^{-j\kappa_{\bar{\tau}_{q,l} \to f_{\bar{\tau}_{q,l}}}} + {\mu}_{\bar{\tau}_{q,l}}^{\mrm{pri}} e^{-j{\kappa}_{\bar{\tau}_{q,l}}^{\mrm{pri}}} | $ and $\mu_{\bar{\tau}_{q,l}} = \mrm{arg}(\mu_{\bar{\tau}_{q,l} \to f_{\bar{\tau}_{q,l}}} e^{-j\kappa_{\bar{\tau}_{q,l} \to f_{\bar{\tau}_{q,l}}}} + {\mu}_{\bar{\tau}_{q,l}}^{\mrm{pri}} e^{-j{\kappa}_{\bar{\tau}_{q,l}}^{\mrm{pri}}})$. \eqref{b_tau1} follows from \eqref{mp_c}, and \eqref{b_tau2} follows from \eqref{mp_a}.
% and \eqref{b_tau3} follows from \eqref{VM_fy_to_tau} and \eqref{M_ftau_to_tau2}.

We next update the messages related to $\bar{\theta}_{q,l}$. Following the steps in \eqref{M_y_to_tau}-\eqref{M_y_to_tau2}, we obtain 
\begin{align}
	& \mal{M}_{f_{\mbf y_t} \to \bar{\theta}_{q,l}} (\bar{\theta}_{q,l}) \propto \mrm{exp} \Bigg( \frac{2}{\gamma} \Big( \operatorname{Re} \Big \{  \big( \mbf R_t \hat{\mbf s}_{t,l}^* - \sum_{j \neq l} \mu_{\beta_{t,j}} \hat{\mbf s}_{t,l}^H \hat{\mbf s}_{t,j} \hat{\mbf b}_{j}\notag \\
	& \hspace{0.7cm} - \sum_{k,j} \mu_{\beta_{t,k,j}^I} \hat{\mbf s}_{t,l}^H \hat{\mbf s}_{t,k,j}^I \hat{\mbf b}_{t,k,j}^I \big)^H  \mu_{\beta_{t,l}} \mbf a_{M_1}(\bar{\theta}_{q,l}) \otimes \hat{\mbf a}_{M_2} \Big \} \Big) \Bigg).	\label{M_fy_to_theta}
\end{align}
Similar to \eqref{M_ftau_to_tau}-\eqref{26}, we obtain the belief $b_{f_{\mbf y_t}}(\bar{\theta}_{q,l}) = \mal{VM}(\bar{\theta}_{q,l},\mu_{\bar{\theta}_{q,l}},\kappa_{\bar{\theta}_{q,l}} )$. Note that $\bar{\theta}_{q,l}$ and $\bar{\phi}_{q,l}$ are symmetrical in the signal model \eqref{y_CKM}. Then, $b_{\bar{\phi}_{q,l}}(\bar{\phi}_{q,l})$ is obtained by following the steps for $b_{f_{\mbf y_t}}(\bar{\theta}_{q,l})$. 
% $\mal{M}_{f_{\mbf y_t} \to \bar{\phi}_{q,l}}$ is obtained by replacing the expectation over $\bar{\theta}_{q,l}$ with the expectation over $\bar{\phi}_{q,l}$ in \eqref{M_fy_to_theta}. Then, similar to \eqref{M_ftau_to_tau}-\eqref{b_tau}, we obtain the belief $b_{\bar{\phi}_{q,l}}(\bar{\phi}_{q,l}) = \mal{VM}(\bar{\phi}_{q,l},\hat{\mu}_{\bar{\phi}_{q,l}},\hat{\kappa}_{\bar{\phi}_{q,l}} )$.
As for the message computation of channel parameters of interferences, we note that 
 ${\tau}_{t,k,l}$, ${\theta}_{t,k,l}$, and ${\phi}_{t,k,l}$ are symmetrical to $\bar{\tau}_{q,l}$, $\bar{\theta}_{q,l}$, and $\bar{\phi}_{q,l}$, respectively, when $|\mbb T_q|=1$. The corresponding messages are obtained from a simplified version of updating the messages of $\bar{\tau}_{q,l}$, or more specifically, by removing the message product at different time-slots in \eqref{M_ftau_to_tau}-\eqref{VM_fy_to_tau}.

\subsubsection{Messages related to $\bsm{\beta}_t$ and $\bsm{\beta}_{t,k}^I$}

For inferring $\bsm{\beta}_t$ and $\bsm{\beta}_{t,k}^I$, we express \eqref{y_CKM} as a linear model of $\bsm{\beta}_t^{uI}$:
\begin{align}
	\mbf y_t =  \bsm{\Psi}_t \bsm{\beta}_t^{uI}  + \mbf n_t, \label{y_for_beta}
\end{align}
where $\bsm{\Psi}_t = [\mbf s_t(\bar{\tau}_{q,1}) \otimes \mbf b(\bar{\theta}_{q,1},\bar{\phi}_{q,1}),...,\mbf s_t(\bar{\tau}_{q,\bar{L}}) \otimes \mbf b(\bar{\theta}_{q,\bar{L}},\bar{\phi}_{q,\bar{L}}),\mbf s_t^I(\tau_{t,1,1}^I) \otimes \mbf b(\theta_{t,1,1}^I,\phi_{t,1,1}^I),...,\mbf s_t^I(\tau_{t,K,\bar{L}_K^I}^I)$ $\otimes \mbf b(\theta_{t,K,\bar{L}_K^I}^I,\phi_{t,K,\bar{L}_K^I}^I) ] $. Given \eqref{y_for_beta}, we use \eqref{b_y_beta} to obtain 
% $b_{f_{\mbf y_t}}(\bsm \beta_t^{uI})$ as
\begin{subequations}
	\begin{align}
		& b_{f_{\mbf y_t}}(\bsm \beta_t^{uI}) \notag \\
		& \propto e^{\int \prod_{v \in \mal{N}(f_{\mbf y_t}) \backslash \bsm{\beta}_t^{uI}} b_{f_{\mbf y_t}}(v) \ln {f_{\mbf y_t}}(\bsm u_f) d \bsm u_f \backslash \bsm{\beta}_t^{uI}} \prod_{v \in \bsm{\beta}_t^{uI}} \mal{M}_{f_{v} \to v }(v)  \notag \\
		& \propto \mal{CN} \left( \mathbb{E}\left[ \bsm{\Psi}_t^H \bsm{\Psi}_t \right]^{-1} \hat{\bsm{\Psi}}_t^H \mbf y_t , \gamma \mathbb{E}\left[ \bsm{\Psi}_t^H \bsm{\Psi}_t \right]^{-1} \right) \notag \\
		& \hspace{0.5cm} \times \mal{CN}(\bsm{\mu}_{f_{\bsm{\beta}_t^{uI}} \to \bsm{\beta}_{t}^{uI}},\bsm{\Sigma}_{ f_{\bsm{\beta}_t^{uI}}  \to\bsm{\beta}_{t}^{uI} }), \label{b_fy_beta0} \\
		& \propto \mal{CN} \left( \bsm{\mu}_{\bsm{\beta}_{t}^{uI}}, \bsm{\Sigma}_{\bsm{\beta}_{t}^{uI}}\right), \label{b_fy_beta}
	\end{align}
\end{subequations} 
where $\mathbb{E}[\cdot]$ is taken w.r.t. $v \in \{ \bar{\tau}_{q,l}, \bar{\theta}_{q,l},\bar{\phi}_{q,l}, \tau_{t,k,l}^I,\theta_{t,k,l}^I,\phi_{t,k,l}^I \}$ based on $b_{f_{\mbf y_t}}(v)$, and 
\begin{align}
	& \bsm{\mu}_{\bsm{\beta}_{t}^{uI}} = \bsm{\Sigma}_{\bsm{\beta}_{t}^{uI}}\left(\bsm{\Sigma}_{ f_{\bsm{\beta}_t^{uI}}  \to\bsm{\beta}_{t}^{uI} }^{-1} \bsm{\mu}_{f_{\bsm{\beta}_t^{uI}} \to \bsm{\beta}_{t}^{uI}} + \gamma^{-1} \hat{\bsm{\Psi}}_t^H \mbf y_t  \right), \notag \\
	& \bsm{\Sigma}_{\bsm{\beta}_{t}^{uI}} = \left( \bsm{\Sigma}_{ f_{\bsm{\beta}_t^{uI}}  \to\bsm{\beta}_{t}^{uI} }^{-1} + \gamma^{-1} \mathbb{E}\left[ \bsm{\Psi}_t^H \bsm{\Psi}_t \right] \right)^{-1}. \label{V_b_fy_beta}
\end{align}
To reduce the computational complexity, we ignore the non-diagonal elements of $\bsm{\Sigma}_{\bsm{\beta}_{t}^{uI}}$ in each message-passing iteration.

Denote by $\mu_{\beta_{t,l}}$ and $v_{\beta_{t,l}}$ the $l$-th element of $\bsm{\mu}_{\bsm{\beta}_{t}^{uI}}$ and the $l$-th diagonal element of $\bsm{\Sigma}_{\bsm{\beta}_{t}^{uI}}$, respectively. We apply \eqref{mp_d} to obtain the message from $f_{\mbf y_t}$ to $\beta_{t,l}$ as
\begin{align}
	\mal{M}_{f_{\mbf y_t} \to \beta_{t,l}}(\beta_{t,l}) & \propto \frac{\int b_{f_{\mbf y_t}}(\bsm{\beta}_{t}^{uI}) d \bsm{\beta}_{t}^{uI} \backslash \beta_{t,l} }{ \mal{M}_{f_{\beta_{t,l}} \to \beta_{t,l}}(\beta_{t,l}) }\notag \\
	& \propto \mal{CN}( \mu_{f_{\mbf y_t} \to \beta_{t,l}}, v_{f_{\mbf y_t} \to \beta_{t,l}} ), \label{M_fy_to_beta}
\end{align}
where 
\begin{subequations}
	\begin{align}
		& \mu_{f_{\mbf y_t} \to \beta_{t,l}} = v_{f_{\mbf y_t} \to \beta_{t,l}} \left( \frac{\mu_{\beta_{t,l}}} {v_{\beta_{t,l}}} - \frac{\mu_{f_{\beta_{t,l}} \to \beta_{t,l}}}{v_{f_{\beta_{t,l}} \to \beta_{t,l}}} \right), \\
		& v_{f_{\mbf y_t} \to \beta_{t,l}} = \left( v_{\beta_{t,l}}^{-1} - v_{f_{\beta_{t,l}} \to \beta_{t,l}}^{-1} \right)^{-1}.
	\end{align}
\end{subequations}
Given Gaussian $\mal{M}_{f_{\mbf y_t} \to \beta_{t,l}}(\beta_{t,l})$ in \eqref{M_fy_to_beta}, we use \eqref{p_rho} and \eqref{mp_b} to obtain
\begin{align}
	 \mal{M}_{f_{\beta_{t,l}} \to \beta_{t,l}}(\beta_{t,l}) & \propto \frac{\mal{P}\left(\mal{M}_{f_{\mbf y_t} \to \beta_{t,l}}(\beta_{t,l}) p(\beta_{t,l}) \right)}{\mal{M}_{f_{\mbf y_t} \to \beta_{t,l}}(\beta_{t,l})} \notag \\
	& \propto \mal{CN}(0,\bar{\rho}_{q,l}). \label{M_fbeta_to_beta}
\end{align}
% where \eqref{M_fbeta_to_beta} follows from the fact that $\mal{M}_{f_{\mbf y_t} \to \beta_{t,l}}(\beta_{t,l}) p(\beta_{t,l})$ is Gaussian. 

By replacing $\beta_{t,l}$ with $\beta_{t,k,l}^I$ in \eqref{M_fy_to_beta}, we obtain the message from $f_{\mbf y_t}$ to $\beta_{t,k,l}^I$ as
\begin{align}
	\mal{M}_{f_{\mbf y_t} \to \beta_{t,k,l}^I}(\beta_{t,k,l}^I) \propto \mal{CN}( \mu_{f_{\mbf y_t} \to \beta_{t,k,l}^I}, v_{f_{\mbf y_t} \to \beta_{t,k,l}^I} ), \label{M_fy_to_betaI}
\end{align}
Let $\mal{M}_{f_{\mbf y_t} \to \bsm{\beta}_{t,k}^I}(\bsm{\beta}_{t,k}^I) \triangleq \prod_{l=1}^L \mal{M}_{f_{\mbf y_t} \to \beta_{t,k,l}^I}(\beta_{t,k,l}^I) = \mal{CN}( \bsm{\mu}_{f_{\mbf y_t} \to \bsm{\beta}_{t,k}^I}, \bsm{\Sigma}_{f_{\mbf y_t} \to \bsm{\beta}_{t,k}^I} ) $, where the $l$-th entry of $\bsm{\mu}_{f_{\mbf y_t} \to \bsm{\beta}_{t,k}^I}$ is $\mu_{f_{\mbf y_t} \to \beta_{t,k,l}^I}$, and $\bsm{\Sigma}_{f_{\mbf y_t} \to \bsm{\beta}_{t,k}^I}$ is diagonal with the $(l,l)$-th entry being $v_{f_{\mbf y_t} \to \beta_{t,k,l}^I}$. From \eqref{b_f}, we obtain $b_{f_{\bsm{\beta}_{t,k}^I}}(\bsm{\beta}_{t,k}^I)$ as  
\begin{align} 
	& b_{f_{\bsm{\beta}_{t,k}^I}}(\bsm{\beta}_{t,k}^I)  \propto p(\bsm{\beta}_{t,k}^I) \mal{M}_{f_{\mbf y_t} \to \bsm{\beta}_{t,k}^I}(\bsm{\beta}_{t,k}^I)
	\notag \\ 
	& \propto (1- \lambda_{t,k}^{\mrm{post}}) \delta(\bsm{\beta}_{t,k}^I) + \lambda_{t,k}^{\mrm{post}} 
	{\cal CN}( \bsm{\beta}_{t,k}^I ; \bsm{\eta}_{f_{\bsm{\beta}_{t,k}^I}} ; \bsm{\varLambda}_{f_{\bsm{\beta}_{t,k}^I}} ), \label{b_f_betaI}
	\end{align}
	where
	\begin{subequations}
	\label{interB}
	\begin{align}
		 & \lambda_{t,k}^{\mrm{post}}  = \Big(1 + \frac{ (1-\lambda )   {\cal CN}(\mbf 0 ; \bsm{\mu}_{f_{\mbf y_t} \to \bsm{\beta}_{t,k}^I} , \bsm{\Sigma}_{f_{\mbf y_t} \to \bsm{\beta}_{t,k}^I} )} 
		 { \lambda {\cal CN}( \mbf 0 ; \bsm{\mu}_{f_{\mbf y_t} \to \bsm{\beta}_{t,k}^I} ,  \bsm{\Sigma}_{f_{\mbf y_t} \to \bsm{\beta}_{t,k}^I} + \mrm{diag}(\bsm{\rho}_{t,k}^I) ) }\Big)^{-1} \label{31a} \\
		 & \bsm{\varLambda}_{f_{\bsm{\beta}_{t,k}^I}} = (\bsm{\Sigma}_{f_{\mbf y_t} \to \bsm{\beta}_{t,k}^I}^{-1} + \mrm{diag}(\bsm{\rho}_{t,k}^I)^{-1} )^{-1}, \\
		 & \bsm{\eta}_{f_{\bsm{\beta}_{t,k}^I}}  = \bsm{\varLambda}_{f_{\bsm{\beta}_{t,k}^I}} \bsm{\Sigma}_{f_{\mbf y_t} \to \bsm{\beta}_{t,k}^I}^{-1}  \bsm{\mu}_{f_{\mbf y_t} \to \bsm{\beta}_{t,k}^I}.
	\end{align}
	\end{subequations}
	With \eqref{b_f_betaI}, we compute $\mal{P}(\int b_{f_{\bsm{\beta}_{t,k}^I}}(\bsm{\beta}_{t,k}^I) d \bsm{\beta}_{t,k}^I \backslash \beta_{t,k,l}^I) \propto \mal{CN}(\mu_{f_{\beta_{t,k,l}^I}}, v_{f_{\beta_{t,k,l}^I}})$ with 
	\begin{subequations}
		\label{b_betaI}
		\begin{align}
			& \mu_{f_{\beta_{t,k,l}^I}} = \lambda_{t,k}^{\mrm{post}} \eta_{f_{\beta_{t,k,l}^I}}, \\
			& v_{f_{\beta_{t,k,l}^I}} = \lambda_{t,k}^{\mrm{post}} ( |\eta_{f_{\beta_{t,k,l}^I}}|^2 + \varLambda_{f_{\beta_{t,k,l}^I}} ) - |\mu_{f_{\beta_{t,k,l}^I}}|^2.
		\end{align}		
	\end{subequations}
	Given \eqref{M_fy_to_betaI} and \eqref{b_betaI}, $\mal{M}_{{f_{\beta_{t,k,l}^I} \to \beta_{t,k,l}^I}}(\beta_{t,k,l}^I)$ is computed as $\mal{CN}(\mu_{f_{\beta_{t,k,l}^I}}, v_{f_{\beta_{t,k,l}^I}}) / \mal{CN}( \mu_{f_{\mbf y_t} \to \beta_{t,k,l}^I}, v_{f_{\mbf y_t} \to \beta_{t,k,l}^I} ) = \mal{CN}(\mu_{f_{\beta_{t,k,l}^I} \to \beta_{t,k,l}^I},v_{f_{\beta_{t,k,l}^I} \to \beta_{t,k,l}^I})$ with
	\begin{subequations}
		\label{m_f_to_betaI}
		\begin{align}
			& \mu_{f_{\beta_{t,k,l}^I} \to \beta_{t,k,l}^I}  = v_{f_{\beta_{t,k,l}^I} \to \beta_{t,k,l}^I} \left( \frac{\mu_{f_{\beta_{t,k,l}^I}}} {v_{f_{\beta_{t,k,l}^I}}}
			- \frac{\mu_{f_{\mbf y_t} \to \beta_{t,k,l}^I}} {v_{f_{\mbf y_t} \to \beta_{t,k,l}^I}} \right), \\
			& v_{f_{\beta_{t,k,l}^I} \to \beta_{t,k,l}^I}  = \left( v_{f_{\beta_{t,k,l}^I}}^{-1} - v_{f_{\mbf y_t} \to \beta_{t,k,l}^I}^{-1} \right)^{-1}.
		\end{align}
	\end{subequations}
	With \eqref{M_fbeta_to_beta} and \eqref{m_f_to_betaI}, we obtain $\bsm{\Sigma}_{ f_{\bsm{\beta}_t^{uI}}  \to\bsm{\beta}_{t}^{uI} } = \mrm{diag}  ([\bar{\rho}_{q,1},..., \\ \bar{\rho}_{q,\bar{L}},v_{f_{\beta_{t,1,1}^I} \to \beta_{t,1,1}^I},...,v_{f_{\beta_{t,K,\bar{L}_{K}^I}^I} \to \beta_{t,K,\bar{L}_{K}^I}^I}]^T \! )$ and $\bsm{\mu}_{f_{\bsm{\beta}_t^{uI}} \to \bsm{\beta}_{t}^{uI}} \!$ $  =[ 0,...,0,\mu_{f_{\beta_{t,1,1}^I} \to \beta_{t,1,1}^I},...,\mu_{f_{\beta_{t,K,\bar{L}_{K}^I}^I} \to \beta_{t,K,\bar{L}_{K}^I}^I} \! ]^T$ in \eqref{b_fy_beta0}.
	
	% Modify marginal BG to joint BG.

\subsection{EM Learning}

In practice, the prior parameters $\bsm{\omega} =  \{\gamma,\bar{\rho}_{q,l},\rho_{t,k,l}^I,\lambda \}$ are typically unknown. To learn $\bsm{\omega}$, we adopt the expectation maximization (EM) method \cite[Chap. 9]{bishop2006pattern} based on the received signal $\mbf y_t$. The EM learning is formulated as 
\begin{align} 
	 \bsm{\omega}^{(i+1)} = \arg \max_{\bsm{\omega}} \mbb{E} \Big[ & \sum_t p(\mbf y_t | \bsm{\xi}_q,\bsm{\xi}_t^I ; \gamma) + \sum_{t,l} \ln p(\beta_{t,l};\bar{\rho}_{q,l}) \notag \\ 
	& + \sum_{t,k,l} \ln p( \beta_{t,k,l}^I ; \rho_{t,k,l}^I , \lambda ) \Big],  \label{eq_EM}
\end{align}
where $\mathbb{E}[\cdot]$ is taken based on $b_{f_{\mbf y_t}}(\bsm \beta_t^{uI})$ in and $b_{f_{\mbf y_t}}(v),v \in \{ \bar{\tau}_{q,l}, \bar{\theta}_{q,l},\bar{\phi}_{q,l}, \tau_{t,k,l}^I,\theta_{t,k,l}^I,\phi_{t,k,l}^I \}$. Then, we set the derivatives of \eqref{eq_EM} w.r.t. $\gamma$, $\bar{\rho}_{q,l}$, ${\tau}^I$, and $\lambda$ to zeros, which yields
\begin{subequations}
	\begin{align}
		& \hat{\gamma} = \frac{1}{NM|\mbb{T}_q|} \sum_t \Big( \mbf y_t^H \mbf y_t - 2 \Re \{ \mbf y_t^H \hat{\bsm{\Psi}} \bsm{\mu}_{\bsm{\beta}_t^{uI}} \} \notag \\
		& \hspace{2.8cm} + \mrm{tr}( \mbb E[\bsm{\beta}_t^{uI}(\bsm{\beta}_t^{uI})^H] \mbb E[\bsm{\Psi}\bsm{\Psi}^H] ) \Big), \label{EM_a} \\
		& \hat{\rho}_{q,l} = \frac{1}{|\mbb{T}_q|} \sum_t ( |\mu_{\beta_{t,l}}|^2 + v_{\beta_{t,l}} ),\label{EM_b} \\
		& \hat{\rho}_{t,k,l}^I = \frac{ \sum_{t,k} \lambda^{post}_{t,k}  \sum_l(|\mu_{\beta^I_{t,k,l}}|^2+v_{\beta_{t,k,l}^I})/L_k^I } {\sum_{t,k} \lambda^{post}_{t,k}}, \label{EM_c} \\
		& \hat{\lambda} = \frac{1}{|\mbb{T}_q|} \sum_{t,k} {\lambda^{post}_{t,k} }/{|\mal{K}_t|} \label{EM_d}.
	\end{align}
\end{subequations}

\subsection{Overall Algorithm}

We summarize the proposed message-passing algorithm in Algorithm 1. For initialization, one choice is to use $p(v)$ and $b_{f_{\mbf y_t}}(v), v \in \mal{N}_{\mbf y_t}$ from the previous update period of CKM. This strategy may lead to a probability mismatch when the birth/death of channel paths exist. For algorithm robustness, we choose to use a non-informative prior for $p(v), v \in \{ \bar{\tau}_{q,l}, \bar{\theta}_{q,l},\bar{\phi}_{q,l}, \tau_{t,k,l}^I,\theta_{t,k,l}^I,\phi_{t,k,l}^I \}$, i.e., $\mu_v^{\mrm{pri}} \to 0$ and $\kappa_v^{\mrm{pri}} \to 0$ to make $p(v)$ tend to be a uniform pdf in $[0,2\pi)$ \cite{mardia2009directional}. The initial $\mu_v$ for $b_{f_{\mbf y_t}}(v),v \in \{ \bar{\tau}_{q,l}, \bar{\theta}_{q,l},\bar{\phi}_{q,l}, \tau_{t,k,l}^I,\theta_{t,k,l}^I,\phi_{t,k,l}^I \}$
randomly take the values within the region $[0,2\pi)$. By noting that a small concentration leads to $\mbb{E}_v[\mbf a_x(v)] \to \mbf 0$, we set $\kappa_v$ of $b_{f_{\mbf y_t}}(v)$ to a relatively large value, e.g., $10^5$. Then, a least-square initialization of $\bsm{\mu}_{\bsm{\beta}^{uI}_t}$ is adopted as $ \bsm{\mu}_{\bsm{\beta}^{uI}_t} = (\hat{\bsm \Psi}_t^H  \hat{\bsm \Psi}_t)^{-1} \hat{\bsm \Psi}_t^H \mbf y_t, \forall t \in \mbb{T}_q$. Particularly, sparsity $\lambda$ is initialized to $1$, which means that $\bsm{\beta}_{t,k}^I,\forall t,k$ are treated as non-sparse vectors. This setting is because with random initialization of $ \mu_v, v \in \{ \bar{\tau}_{q,l}, \bar{\theta}_{q,l},\bar{\phi}_{q,l}, \tau_{t,k,l}^I,\theta_{t,k,l}^I,\phi_{t,k,l}^I \}$, $\mu_v$ dramatically deviate from the true values in the first iteration, leading to a relatively severe mismatch of $\mathbf a_N(\cdot)$ and $\mbf a_{M_1}(\cdot)\otimes \mbf a_{M_2}(\cdot)$. Due to this mismatch, the magnitude of the initial estimate of $\bsm{\beta}_t^{uI}$ is small, causing the EM learning $\hat{\lambda} \to 0$. To avoid this issue, we set $\lambda = 1$ for the first message-passing. Then, when all parameters have been updated once, we set $\lambda = 0.5$ to exploit the sparse prior in \eqref{p_rho_I}, and the value of $\lambda$ in the following iterations is given by the EM learning. The prior variances of $\beta_{t,l}$ and $\beta_{t,k,l}^I$ are initialized to $\bar{\rho}_{q,l} = 1/\bar{L}$, and $\rho_{t,k,l}^I = 1/ L_k^I$, respectively. Besides, we initialize $\gamma = \sum_{t \in \mbb{T}_q } || \mbf y_t ||_2^2 / (MN|\mbb{T}_q|)$, $\mu_{f_{\beta_{t,k,l}^I} \to \beta_{t,k,l}^I}=0$, $v_{f_{\beta_{t,k,l}^I} \to \beta_{t,k,l}^I}=1/L_k^I$, $\mu_{f_{\beta_{t,k}} \to \beta_{t,k}}=0$, and $v_{f_{\beta_{t,k}} \to \beta_{t,k}}=1/\bar{L}$. 

Algorithm 1 constructs the CKM in the Bayesian inference framework. In particular, step 1 updates the messages of $\bar{\tau}_{q,l}$, $\bar{\theta}_{q,l}$ and $\bar{\phi}_{q,l}$.
% where we use the mean directions of $b_{f_{\mbf y_t}}(v), v \in \{\bar{\tau}_{q,l},\bar{\theta}_{q,l},\bar{\phi}_{q,l}\}$ as the estimated delays and angles. 
Alongside the EM learning of $\bar{\rho}_{q,l}$ in step 4, we construct the CKM mapping $\mal C(\cdot)$ from the algorithm outputs at the final iteration. 
% Compared to existing OMP \cite{OMP} which is typically adopted for parameter extraction, the proposed method outputs continuous parameters rather than discrete ones from a pre-determined dictionary. This is more representative of the continuous nature of actual channel parameters. 
Compared to the compressive-sensing (CS) methods \cite{OMP,AMP_2009,turbo_ma,STCS} designed for an AWGN signal model, the proposed algorithm simultaneously estimates the channel parameters from multiple observations $\{\mbf y_t\}_{t \in \mbb{T}_q}$ and cancel interferences by exploiting the block-sparsity. In Sec. VI, we show that Algorithm 1 can achieve accurate CKM construction at a relatively low SINR, while other methods cannot.

We now analyze the computational complexity of Algorithm 1. In step 1, the dominant computation is the matrix multiplication in \eqref{M_y_to_tau2} and \eqref{M_fy_to_theta} with the complexity $\mal{O}(|\mbb{T}_q|MN\bar{L}+ MN\sum_k \bar{L}_k^I)$. The other operations involve point-wise calculations with a computational complexity linear to the system size. In steps 2-3, the matrix inversions in \eqref{b_fy_beta} and \eqref{V_b_fy_beta} dominate the computational complexity with complexity $\mal{O}(|\mbb{T}_q|(\bar{L}+\sum_k \bar{L}_k^I)^3)$. The EM learning in steps 4-5 involves point-wise operations and matrix-vector multiplications, which are computationally negligible compared to the matrix inversions in steps 2-3. In summary, the computational complexity of Algorithm 1 is $\mal{O}(|\mbb{T}_q|MN\bar{L}+ MN\sum_k \bar{L}_k^I + |\mbb{T}_q|(\bar{L}+\sum_k \bar{L}_k^I)^3 )$ per iteration.

\begin{algorithm}[h]
   \small
   \caption{\label{alg1} Interference-cancellation-based CKM}
   \begin{algorithmic}
	
	   \REQUIRE $\{\mathbf{y}_t\}_{t \in \mbb{T}_q}$, $\{\mbf x_t\}_{t \in \mbb{T}_q}$, and $\{\mbf x_{t,k}^I\}_{t \in \mbb{T}_q,\forall k}$
	   
	   \hspace{-9pt} \textbf{Parameter initialization}  % $\mu_v \leftarrow \mal{U}(0,2\pi), \kappa_v \to \infty, v \in \{ \bar{\tau}_{q,l}, \bar{\theta}_{q,l},\bar{\phi}_{q,l}, \tau_{t,k,l}^I,\theta_{t,k,l}^I,\phi_{t,k,l}^I\}$, $\bsm{\mu}_{\bsm{\beta}^{uI}_t} = (\hat{\bsm \Psi}_t^H  \hat{\bsm \Psi}_t)^{-1} \hat{\bsm \Psi}_t^H \mbf y_t$, $\gamma = \sum_{t \in \mbb{T}_q } || \mbf y_t - \hat{\bsm \Psi}_t \bsm{\beta}^{uI}_t ||_2^2 / |\mbb{T}_q|$, $\bar{\rho}_{q,l} = 1/\bar{L}$, $\rho_{t,k,l}^I = 1/ L$, $\mu_{f_{\beta_{t,k,l}^I} \to \beta_{t,k,l}^I}=0$, $v_{f_{\beta_{t,k,l}^I} \to \beta_{t,k,l}^I}=1/L$, $\mu_{f_{\beta_{t,k}} \to \beta_{t,k}}=0$, $v_{f_{\beta_{t,k}} \to \beta_{t,k}}=1/\bar{L}$, and $\lambda = 1$.
	   % $\beta_{t,l}=0$, $\beta_{t,k,l}^I=0$. $\bar{\tau}_{q,l}$, $\bar{\theta}_{q,l}$, $\bar{\phi}_{q,l}$, ${\tau}_{t,k,l}^I$, ${\theta}_{t,k,l}^I$, and ${\theta}_{t,k,l}^I$ randomly take the values within $[0,2\pi)$.
	  
	   \textbf{For} iteration number $i=1,2,...,I_{\mrm{max}}$, \textbf{do}
	   
	~~ \% Message-passing.

	   \quad \ 1: Update messages of $v \in \{ \bar{\tau}_{q,l}, \bar{\theta}_{q,l},\bar{\phi}_{q,l}, \tau_{t,k,l}^I,\theta_{t,k,l}^I,\phi_{t,k,l}^I \}$
	   
	   \quad \quad ~ by \eqref{M_y_to_tau2}, \eqref{VM_fy_to_tau}-\eqref{M_ka_fy_to_tau}, and \eqref{b_tau}-\eqref{M_fy_to_theta};
	   
	   \quad \ 2: Update messages of $\beta_{t,l}$ by \eqref{V_b_fy_beta} and \eqref{M_fbeta_to_beta}; 

	   \quad \ 3: Update messages of $\beta_{t,k,l}^I$ by \eqref{M_fy_to_betaI} and \eqref{interB}-\eqref{m_f_to_betaI}; 

	   ~~ \% EM learning.

	   \quad \ 4: Update $\gamma$, $\bar{\rho}_{q,l}$, and $\rho_{t,k,l}^I$ by \eqref{EM_a}-\eqref{EM_c};
	   
	   \quad \ 5: Set $\lambda=0.5$ if $i=1$ and update $\lambda$ by \eqref{EM_d} if $i > 1$.

	   \textbf{end}
	   
	   \ENSURE $\bar{\tau}_{q,l} \leftarrow \mu_{\bar{\tau}_{q,l}}$, $\bar{\theta}_{q,l} \leftarrow \mu_{\bar{\theta}_{q,l}}$, $\bar{\phi}_{q,l} \leftarrow \mu_{\bar{\phi}_{q,l}}$, $\bar{\rho}_{q,l} \leftarrow \mu_{\bar{\rho}_{q,l}}$.
	
   \end{algorithmic}
\end{algorithm}

% At the first iteration, we initialize these V-M beliefs with random mean directions and near-zero concentrations, i.e., $1$  

\vspace{-0.5cm}

\section{CKM-Assisted Channel Estimation}

In subsection A, we introduce the minimum mean square error and interference rejection combining (MMSE-IRC) channel estimator \cite{IRC}. In MMSE-IRC, the joint frequency-space covariance of the user channel and the spatial covariance of interference signals are estimated based on the CKM, as shown in subsection B. In subsection C, we reduce the computational complexity of MMSE-IRC by exploiting the structure of user channel covariance. In subsection D, we analyze how the accuracy of user channel covariance impacts the performance of the proposed MMSE-IRC.

\vspace{-0.3cm}
\subsection{MMSE-IRC Estimator}

Consider the channel estimation at current time-slot $t$. Recall that the period of pilot transmission is typically much shorter than the period of CKM update. 
Then, it is challenging for neighbor cells to establish a low-latency sharing of pilot usage information for real-time channel estimation.
% \footnote{In 5g cellular networks, pilots are independently scheduled at different cells without information sharing.} 
For the generality of channel estimator, we treat $\mbf X_{t,k}^I,\forall k$ as unknown interference pilots here.
% But a main problem is that the interference pilot $\mbf X_{t,k}^I$ is typically unknown in channel estimation.\footnote{In 5G, pilots are independently scheduled in different cells without inter-cell cooperation to inform pilot usage information. Achieving low-latency cooperation to assist real-time channel estimation remains an open problem.}
Based on \eqref{Y_t}, we multiply $\mbf Y_t$ by $\mbf X^{-1}$ and vectorize  $\mbf X^{-1}\mbf Y_t$ to obtain
% \begin{align}
%   \mbf X^{-1} \mbf Y_t = \mbf H_t + \sum_{k} \mbf X^{-1} \mbf X_{t,k}^I \mbf H_{t,k}^I + \mbf X^{-1} \mbf N_t, \label{XY}
% \end{align}
%as an additive colored noise model of $\mbf H_t$. Recall that $\mbf X$ is a diagonal pilot matrix with the average power of each diagonal element being $P$. Then, $\mbf X^{-1} \mbf N_t$ is a AWGN matrix with elements independently drawn from $\mal{CN}(0,\gamma /P)$. We next provide a vector-form of \eqref{XY} as
\begin{align}
 \mbf r_t = \mbf h_t + \sum_k \tilde{\mbf h}_{t,k}^I + \mbf w_t, \label{r_t}
\end{align}
where $\mbf r_t = \mrm{vec}(\mbf X^{-1} \mbf Y_t) $; $\mbf h_t = \mrm{vec}(\mbf H_t)$; $\tilde{\mbf h}_{t,k}^I = \mrm{vec}(\mbf X^{-1} \mbf X_{t,k}^I \mbf H_{t,k}^I)$; 
%Recall that $\mbf X$ is a diagonal pilot matrix with the average pilot power being $P$. 
$\mbf w_t$ is a AWGN vector with elements independently drawn from $\mal{CN}(0,\gamma /P)$.  Denote by $\hat{\mbf C}_{t}$ and $\hat{\mbf R}_t^I$ the estimates of the covariances of $\mbf h_t$ and $\sum_k \tilde{\mbf h}_{t,k}^I$, respectively. Then, the MMSE-IRC estimator is expressed as 
\begin{align}
 \hat{\mbf h}_t = \hat{\mbf C}_{t} \left( \hat{\mbf C}_{t} + \hat{\mbf R}_t^I  \right)^{-1} \mbf r_t. \label{IRC}
\end{align}
% Note that the estimator \eqref{IRC} becomes the well-known linear MMSE estimator \cite[Chp. 12]{Kay} when $\hat{\mbf C}_{t}$ and $\hat{\mbf R}_t^I$ equal the covariances of $\mbf h_t$ and $\sum_k \tilde{\mbf h}_{t,k}^I + \mbf w_t$, respectively. 
% To effectively retrieve $\mbf h_t$, we need to accurately estimate the covariance matrix of $\mbf h_t$ and $\tilde{\mbf h}_{t,k}^I$. The latter is, however, typically unknown in practice and needs to be estimated.
The MMSE-IRC performance is affected by $\hat{\mbf C}_{t}$ and $\hat{\mbf R}_t^I$. Particularly, when $\hat{\mbf C}_{t}$ and $\hat{\mbf R}_t^I$ equal the covariances of $\mbf h_t$ and $\sum_k \tilde{\mbf h}_{t,k}^I + \mbf w_t$, \eqref{IRC} becomes the Bayesian optimal linear MMSE estimator \cite[Chp. 12]{Kay}. In the following, we employ the CKM to provide strong prior $\hat{\mbf C}_{t}$ and assist the acquisition of $\hat{\mbf R}_t^I$.  
\vspace{-0.5cm}
\subsection{Covariance Estimates based on CKM}
% We now consider the choices of $\hat{\mbf C}_{t}$ and $\hat{\mbf R}_t^I$. 
To obtain $\hat{\mbf C}_{t}$, we map time-slot $t$ to location $q$ by using $q = \mal{L}(t)$ in \eqref{q_t}. Then, we apply $\mal C(q)$ in \eqref{CKM} to obtain the channel parameters $\{ \bar{\tau}_{q,l},\bar{\theta}_{q,l},\bar{\phi}_{q,l},\bar{\rho}_{q,l} \}_{l=1}^{\bar{L}}$. With some abuse of notation, $\{ \bar{\tau}_{q,l},\bar{\theta}_{q,l},\bar{\phi}_{q,l},\bar{\rho}_{q,l} \}_{l=1}^{\bar{L}}$ are treated as deterministic variables in this section. Then, we apply  \eqref{apr_H_t} to express $\mbf h_t = \mrm{vec}(\mbf H_t)$ as 
\begin{align}
 \mbf h_t \! = \! \sum_l \bar{\alpha}_{t,l} \sqrt{\bar{\rho}_{q,l}} \mbf b(\bar{\theta}_{q,l},\bar{\phi}_{q,l}) \otimes {\mbf a}_{N} (\bar{\tau}_{q,l}) \! + \! \mrm{vec}(\bsm{\Delta}_{\mbf H_t}).
\end{align} 
Recall that $\bar{\alpha}_{t,l},\forall t,l$ are modeled as i.i.d. complex elements with zero means and unit variances. We ignore the representation error $\mrm{vec}(\bsm{\Delta}_{\mbf H_t})$ and compute 
\begin{align}
 \hat{\mbf C}_{t} = \mbb E_{\{\bar{\alpha}_{t,l}\}}[ \mbf h_t \mbf h_t^H ] = \sum_l \bar{\rho}_{q,l} \left(\mbf b \otimes {\mbf a}_{N} \right) \left(\mbf b \otimes {\mbf a}_{N} \right)^H.
\end{align}

To obtain $\hat{\mbf R}_t^I$, we first analyze the structure of ${\mbf R}_t^I$. Specifically, we use \eqref{H_I} and express $\tilde{\mbf h}_{t,k}^I$ as
\begin{align}
 \tilde{\mbf h}_{t,k}^I = \sum_{l=1}^{L_k^I} \alpha_{t,k,l}^I \sqrt{\rho_{t,k,l}^I} \mbf b(\theta_{t,k,l}^I,\phi_{t,k,l}^I) \otimes \mbf s_{t,k,l},
    \label{45}
\end{align}
where $\mbf s_{t,k,l} = \mbf X^{-1} \mbf X_{t,k}^I \mbf a_N(\tau_{t,k,l}^I)$. Since $\mbf X_{t,k}^I$ is unknown, we treat $\mbf X_{t,k}^I$ as a random matrix with $\mbb E[\mbf X_{t,k}^I (\mbf X_{t,k}^I)^H] =  P^I \mbf I$.
% Recall that in channel estimation, $\mbf X_{t,k}^I$ is unknown with $\mbb E[\mbf X_{t,k}^I (\mbf X_{t,k}^I)^H] =  P^I \mbf I$. 
It follows that $\mbb E[ \mbf s_{t,k,l} \mbf s_{t,k,l}^H ] = P^I P^{-1} \mbf I$. 
Then, we obtain
\begin{subequations}
    \begin{align}
 \mbf R_t^I & = \mbb E_{ \{\mbf s_{t,k,l},\alpha_{t,k,l}^I \}} [ (\sum_k \tilde{\mbf h}_{t,k}^I + \mbf w_t)(\sum_k \tilde{\mbf h}_{t,k}^I + \mbf w_t)^H ], \notag \\
        % & = \sum_{k,l} \rho_{t,k,l}^I \mbb E \left[ (\mbf b  \otimes \mbf s_{t,k,l}) (\mbf b  \otimes \mbf s_{t,k,l})^H \right] + \gamma P^{-1}\mbf I  \notag \\
        & = \sum_{k,l} \rho_{t,k,l}^I  \left( \mbf b  \mbf b ^H \right) \otimes \mbb E [\mbf s_{t,k,l} \mbf s_{t,k,l}^H] + \gamma P^{-1}\mbf I, \label{Egg_2} \\
        % & = (\gamma P^{-1} +  P^I P^{-1} \sum_{k,l} \rho_{t,k,l}^I \mbf b  \mbf b ^H) \otimes \mbf I, \label{Egg_3} \\
        & = \mbf Q_{t} \otimes \mbf I, \label{R_t^I}
    \end{align} 
    \end{subequations}
where $\mbf Q_{t} = (\gamma P^{-1} \mbf I +  P^I P^{-1} \sum_{k,l} \rho_{t,k,l}^I \mbf b  \mbf b ^H)$, \eqref{Egg_2} follows from \eqref{45} and $ (\mbf A_1 \otimes \mbf A_2)(\mbf B_1 \otimes \mbf B_2) = (\mbf A_1 \mbf B_1) \otimes (\mbf A_2 \mbf B_2) $; \eqref{R_t^I} follows from $\mbb E[ \mbf s_{t,k,l} \mbf s_{t,k,l}^H ] = P^I P^{-1} \mbf I$. 
%The covariance of  $\mrm{vec}(\mbf F^H \mbf X^{-1} \mbf X_{t,k}^I \mbf H_{t,k}^I) $ equals \eqref{Egg_3} by noting that $\tilde{\mbf s}_{t,k,l} = \mbf F^H \mbf X^{-1} \mbf X_{t,k}^I \mbf a_N(\tau_{t,k,l}^I)$ has the property of $\mbb E[\tilde{\mbf s}_{t,k,l} \tilde{\mbf s}_{t,k,l}^H] = P^I \mbf I$. 
Note that $\mbf Q_{t}$ in \eqref{R_t^I} represents the spatial covariance of $\sum_k \tilde{\mbf h}_{t,k}^I + \mbf w_t$.
% due to unknown $\rho_{t,k,l}^I$, $\theta_{t,k,l}^I$, and $\phi_{t,k,l}^I$ 
We adopt a sample covariance method to estimate $\mbf Q_{t}$. Specifically, we transform the frequency-space-domain signal $\mbf r_t$ into the delay-space-domain as 
\begin{align}
 \tilde{\mbf r}_t = (\mbf I \otimes \mbf F^H) \mbf r_t = \tilde{\mbf h}_t +  (\mbf I \otimes \mbf F^H) \Big(\sum_k \tilde{\mbf h}_{t,k}^I + \mbf w_t \Big),
\end{align}
where $\tilde{\mbf h}_t = (\mbf I \otimes \mbf F^H) \mbf h_t$ represents the  delay-space-domain channel with $\mbf F$ being the DFT matrix. The maximum delay spread of $\tilde{\mbf h}_t$ is typically much smaller than the OFDM symbol duration, i.e., $\tilde{h}_{t, mN+j} \approx 0$ for $j > J_0$ and any $m \in\{0,...,M-1\}$. Then, the signal element $\tilde{r}_{t,mN+j},j > J_0,\forall m,$ includes only the interference $(\mbf I \otimes \mbf F^H)(\sum_k \tilde{\mbf h}_{t,k}^I + \mbf w_t)$. Furthermore, given $\mbf R_t^I = \mbf Q_{t} \otimes \mbf I$, the covariance of $(\mbf I \otimes \mbf F^H)(\sum_k \tilde{\mbf h}_{t,k}^I + \mbf w_t)$ equal $(\mbf I \otimes \mbf F^H)(\mbf Q_{t} \otimes \mbf I)(\mbf I \otimes \mbf F) = \mbf Q_{t} \otimes \mbf I$. This result indicates that the covariance of $\sum_k \tilde{\mbf h}_{t,k}^I + \mbf w_t$ remains unchanged under the transformation by $\mbf I \otimes \mbf F^H$. As such, we next determine the maximum delay spread of $\tilde{\mbf h}_t$ and use the sample covariance of $\tilde{\mbf r}_t$ to estimate $\mbf Q_t$. Given the CKM outputs $\{\bar{\rho}_{q,l},\bar{\tau}_{q,l},\bar{\theta}_{q,l},\bar{\phi}_{q,l}\}$, the PDP of $\tilde{\mbf h}_t$ is expressed as 
\begin{align}
 \mrm{P}_{q,j} & = \frac{1}{M} \sum_{m=0}^{M-1} \mbb E_{\{\bar{\alpha}_{t,l}\}}[  |\tilde{\mbf h}_{t,mN+j}|^2 ] \notag \\
    & =  \sum_l \bar{\rho}_{q,l} \frac{||\mbf b_l||_2^2}{M}  ||\mbf F^H \mbf a_N(\bar{\tau}_{q,l})||_2^2, ~ j=1,...,N.
\end{align} 
Denote by $P^{\mrm{thres}}$ the power threshold and define $\mal J = \{ j | \mrm{P}_{q,j} < P^{\mrm{thres}} \}$ to represent the set of indices of delay taps corresponding to $\tilde{\mbf h}_{t,mN+j} \approx 0$. Then, we estimate $\mbf Q_{t}$ as
\begin{align}
 \hat{\mbf Q}_{t} = \frac{1}{|\mal{J}|} \sum_{j \in \mal{J}}  & [\tilde{r}_{t,0N+j},...,\tilde{r}_{t,(M-1)N+j}]^T \notag \\
 &\times [\tilde{r}_{t,0N+j},...,\tilde{r}_{t,(M-1)N+j}]^*. \label{C_I_hat}
\end{align}
With \eqref{R_t^I} and \eqref{C_I_hat}, we obtain $\hat{\mbf R}_t^I = \hat{\mbf Q}_{t} \otimes \mbf I$.

% Note that

% $\rho_{t,k,l}^I,\theta_{t,k,l}^I,\phi_{t,k,l}^I$
% We apply the IDFT to $\mbf X^{-1} \mbf Y_t$

% power delay profile (PDP) of $\mbf F^H \mbf H_t$ 

% $ \mbf p_q =  \sum_l \bar{\rho}_{q,l} \frac{|| \mbf b ||}{M} \mbf F^H \mbf a_N(\bar{\tau}_{q,l}) $

% $\hat{\mbf R}^I = \mbf I \otimes (\sum_{j \in \mal J} \mbf{u}_{t,j} \mbf{u}_{t,j})$  $\mal J = \{ j | p_{q,j} < P^{\mrm{thres}} \}$

\vspace{-0.2cm}
\subsection{Low-Complexity MMSE-IRC}

The computational complexity of MMSE-IRC in \eqref{IRC} is dominated by the matrix inversion $( \hat{\mbf C}_{t} + \hat{\mbf R}_t^I)^{-1}$ with complexity $\mal{O}(N^3M^3)$, which is  prohibitively high for relatively large $N$ and $M$. To reduce the computational complexity, we introduce a matrix factorization of $\hat{\mbf C}_{t}$ as 
\begin{align}
 \hat{\mbf C}_{t} = \mbf A_{q} \bsm{\Sigma}_q \mbf A_{q}^H, \label{fac}
\end{align}
where $\mbf A_{q} \! = \! [\mbf b(\bar{\theta}_{q,1},\bar{\phi}_{q,1}) \otimes {\mbf a}_{N} (\bar{\tau}_{q,1}),...,\mbf b(\bar{\theta}_{q,\bar{L}},\bar{\phi}_{q,\bar{L}})\otimes{\mbf a}_{N} (\bar{\tau}_{q,\bar{L}})] \in \mbb C^{NM\times \bar{L}}$ and $\bsm{\Sigma}_q \! = \! \mrm{diag}([\bar{\rho}_{q,1},...,\bar{\rho}_{q,\bar{L}}]^T)$. By substituting \eqref{fac} and $\hat{\mbf R}_t^I = \hat{\mbf Q}_{t} \otimes \mbf I$ into \eqref{IRC}, we obtain
\begin{subequations}
    \begin{align}
 \hat{\mbf h}_t & = \mbf A_{q} \bsm{\Sigma}_q \mbf A_{q}^H  \left( \mbf A_{q} \bsm{\Sigma}_q \mbf A_{q}^H + \hat{\mbf Q}_{t} \otimes \mbf I \right)^{-1} \mbf r_t, \label{42b} \\
        & = \mbf A_{q} \left( \mbf A_{q}^H (\hat{\mbf Q}_{t}^{-1} \otimes \mbf I) \mbf A_{q} + \bsm{\Sigma}_q^{-1} \right)^{-1} \notag \\
        & \hspace{0.4cm} \times \mbf A_{q}^H (\hat{\mbf Q}_{t}^{-1} \otimes \mbf I) \mbf r_t, \label{42c}
    \end{align}
\end{subequations}
where \eqref{42c} is from the Woodbury matrix identity. With \eqref{42c}, the computational complexity of the matrix inversion reduces from $\mal{O}(N^3M^3)$ to $\mal{O}(\bar{L}^3 + M^3)$, where $\bar{L}$ can be modified to strike a complexity-performance balance. In \eqref{42c}, $\mbf A_{q}^H (\hat{\mbf Q}_{t}^{-1} \otimes \mbf I) \mbf A_{q}$ dominates the complexity with $\bar{L}^2NM+\bar{L}M^2N$ complex multiplications. We next show that the complexity of $\mbf A_{q}^H (\hat{\mbf Q}_{t}^{-1} \otimes \mbf I) \mbf A_{q}$ can be substantially reduced by exploiting the properties of the Khatri-Rao product. Specifically, denote by $\circledast_c$ the column-wise Khatri-Rao product. We express $\mbf A_{q}$ as $\mbf A_{q} = \mbf B_{q} \circledast_c \mbf A_{f,q}$ with $\mbf A_{f,q} = [{\mbf a}_{N} (\bar{\tau}_{q,1}),...,{\mbf a}_{N} (\bar{\tau}_{q,\bar{L}})]$ and $\mbf B_{q} = [\mbf b(\bar{\theta}_{q,1},\bar{\phi}_{q,1}),...,\mbf b(\bar{\theta}_{q,\bar{L}},\bar{\phi}_{q,\bar{L}})]$. Then, we obtain 
\begin{subequations}
    \begin{align}
        & \mbf A_{q}^H (\hat{\mbf Q}_{t}^{-1} \otimes \mbf I) \mbf A_{q} \notag \\
 = & (\mbf B_q \circledast_c \mbf A_{f,q})^H (\hat{\mbf Q}_{t}^{-1} \otimes \mbf I) (\mbf B_q \circledast_c \mbf A_{f,q}), \label{Kro_1} \\
 = & (\mbf B_q^H \circledast_r \mbf A_{f,q}^H) ((\hat{\mbf Q}_{t}^{-1} \mbf B_q) \circledast_c \mbf A_{f,q}), \label{Kro_2} \\
 = & (\mbf B_q^H \hat{\mbf Q}_{t}^{-1} \mbf B_q) \odot (\mbf A_{f,q}^H\mbf A_{f,q}) \label{Kro_3},
    \end{align} 
\end{subequations} 
where \eqref{Kro_2} is from $(\mbf A_1 \otimes \mbf A_2)(\mbf A_3 \circledast_c \mbf A_4) = (\mbf A_1 \mbf A_2) \circledast_c (\mbf A_3 \mbf A_4) $; \eqref{Kro_3} is from $(\mbf A_1 \circledast_r \mbf A_2)(\mbf A_3 \circledast_c \mbf A_4) = (\mbf A_1 \mbf A_2) \odot (\mbf A_3 \mbf A_4)$ with $\circledast_r$ representing the row-wise Khatri-Rao product. With \eqref{Kro_3}, the computational complexity of $\mbf A_{q}^H (\hat{\mbf Q}_{t}^{-1} \otimes \mbf I) \mbf A_{q}$ is reduced from $\bar{L}^2NM+\bar{L}M^2N$ to $\bar{L}^2(N+M) + \bar{L}M^2$. Overall, we reduce the computational of MMSE-IRC from $\mal{O}(N^3 M^3)$ to $\mal{O}(M^2(M+\bar{L})+(M+N)\bar{L}^2+\bar{L}^3)$.

\subsection{Error Bound of CKM Representation}

In this subsection, we analyze how the CKM outputs impacts the estimation performance of the proposed MMSE-IRC. In \eqref{42c}, the MMSE-IRC can be treated as a linear transformation of $\hat{\mbf h}_t$ with the transformation matrix $ \mbf A_{q} $, i.e.,
\begin{align}
	\hat{\mbf h}_t = \mbf A_{q} \hat{\bsm \beta}_t,
\end{align}
where $\hat{\bsm \beta}_t = \left( \mbf A_{q}^H (\hat{\mbf Q}_{t}^{-1} \otimes \mbf I) \mbf A_{q} + \bsm{\Sigma}_q^{-1} \right)^{-1} \mbf A_{q}^H (\hat{\mbf Q}_{t}^{-1} \otimes \mbf I) \mbf r_t$. Clearly, the transformation matrix $\mbf A_{q}$ affects the reconstruction accuracy $||\hat{\mbf h}_t - \mbf h_t ||_2^2$. We refer to the accuracy lower-bound of this linear reconstruction as \emph{CKM accuracy}:

\begin{definition}
	 The CKM accuracy in time-slot $t$ is defined as
   \begin{align}
	\mal{L}_{\mrm{CKM}} = \min_{\bsm{\beta}_{t}} \frac{1}{N N_b} || \mbf h_t - \mbf A_{q} \bsm{\beta}_{t} ||_2^2.          
   \end{align}
\end{definition}
We next show that the MMSE-IRC performance converges to $\mal{L}_{\mrm{CKM}}$ in the interference-free and error-free case. 

\begin{proposition} 
	In the interference-free and error-free case of system model \eqref{r_t}, i.e., $\tilde{\mbf h}_{t,k}^I=0,\forall k$ and $\mbf w_t=0$, the estimation MSE of the MMES-IRC in \eqref{42c} equals $\mal{L}_{\mrm{CKM}}$.
\end{proposition}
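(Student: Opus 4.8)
The plan is to show that in the stated degenerate regime the MMSE-IRC estimator collapses to the \emph{orthogonal} projection of $\mbf h_t$ onto $\mrm{range}(\mbf A_q)$, and that the squared norm of the resulting residual is exactly the quantity minimized in the definition of $\mal L_{\mrm{CKM}}$. First I would substitute $\tilde{\mbf h}_{t,k}^I=\mbf 0$ and $\mbf w_t=\mbf 0$ into \eqref{r_t}, obtaining $\mbf r_t=\mbf h_t$, so that there is no interference-plus-noise component to reject. By \eqref{R_t^I} the true spatial covariance $\mbf Q_t=\gamma P^{-1}\mbf I+P^I P^{-1}\sum_{k,l}\rho_{t,k,l}^I\mbf b\mbf b^H$ collapses to the scaled identity $\gamma P^{-1}\mbf I$, and in the error-free limit $\gamma\to 0$ its estimate satisfies $\hat{\mbf Q}_t\to\mbf 0$; equivalently $\hat{\mbf R}_t^I\to\mbf 0$, so $\hat{\mbf C}_t+\hat{\mbf R}_t^I$ in \eqref{IRC} degenerates to the rank-$\bar L$ matrix $\hat{\mbf C}_t=\mbf A_q\bsm\Sigma_q\mbf A_q^H$ and \eqref{IRC} must be read with a Moore--Penrose inverse. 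Taking $\hat{\mbf Q}_t=\epsilon\mbf I$ in \eqref{42c} and letting $\epsilon\downarrow 0$, the weighting $\hat{\mbf Q}_t^{-1}\otimes\mbf I=\epsilon^{-1}\mbf I$ factors out of both the bracketed inverse and the trailing term and the regularizer $\epsilon\bsm\Sigma_q^{-1}$ disappears, giving
\begin{align}
\hat{\mbf h}_t=\mbf A_q\big(\mbf A_q^H\mbf A_q+\epsilon\bsm\Sigma_q^{-1}\big)^{-1}\mbf A_q^H\mbf h_t \;\longrightarrow\; \mbf A_q(\mbf A_q^H\mbf A_q)^{-1}\mbf A_q^H\mbf h_t \quad (\epsilon\downarrow 0),
\end{align}
which agrees with $\hat{\mbf C}_t\hat{\mbf C}_t^{\dagger}\mbf h_t$ from the pseudo-inverse reading of \eqref{IRC}. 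Write $\mbf P_{\mbf A_q}=\mbf A_q(\mbf A_q^H\mbf A_q)^{-1}\mbf A_q^H$; this step needs only that $\mbf A_q\in\mbb C^{NM\times\bar L}$ has full column rank $\bar L$, which holds whenever the $\bar L$ triplets $(\bar\tau_{q,l},\bar\theta_{q,l},\bar\phi_{q,l})$ are distinct.

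Next I would recognize $\hat{\bsm\beta}_t=(\mbf A_q^H\mbf A_q)^{-1}\mbf A_q^H\mbf h_t$ as a minimizer of $\|\mbf h_t-\mbf A_q\bsm\beta_t\|_2^2$ over $\bsm\beta_t$, so that $\hat{\mbf h}_t=\mbf A_q\hat{\bsm\beta}_t=\mbf P_{\mbf A_q}\mbf h_t$ is precisely the least-squares fit appearing in the definition of the CKM accuracy. Using that $\mbf I-\mbf P_{\mbf A_q}$ is an orthogonal projector,
\begin{align}
\frac{1}{NN_b}\|\hat{\mbf h}_t-\mbf h_t\|_2^2
&=\frac{1}{NN_b}\|(\mbf I-\mbf P_{\mbf A_q})\mbf h_t\|_2^2 \notag\\
&=\min_{\bsm\beta_t}\frac{1}{NN_b}\|\mbf h_t-\mbf A_q\bsm\beta_t\|_2^2=\mal L_{\mrm{CKM}},
\end{align}
which is the assertion.

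The main obstacle is the first step: making rigorous the sense in which \eqref{42c} --- derived from \eqref{IRC} via the Woodbury identity under an invertible $\hat{\mbf Q}_t$ --- is to be evaluated when there is literally nothing to estimate and $\hat{\mbf C}_t+\hat{\mbf R}_t^I$ is singular. I would resolve this by adopting the pseudo-inverse / $\epsilon\downarrow 0$ reading and checking that the limit exists (continuity is immediate since $\mbf A_q^H\mbf A_q$ is invertible), after which the remainder is routine projection-matrix algebra together with the orthogonality relation $\mbf h_t-\mbf P_{\mbf A_q}\mbf h_t\perp\mrm{range}(\mbf A_q)$.
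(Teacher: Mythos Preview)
Your argument is correct and lands on the same key identification as the paper: in the interference-free, error-free regime the MMSE-IRC estimate reduces to the orthogonal projection $\mbf P_{\mbf A_q}\mbf h_t$, whose residual norm is by definition $\mal L_{\mrm{CKM}}$. The route differs slightly in mechanics. You work from \eqref{42c}, insert $\hat{\mbf Q}_t=\epsilon\mbf I$, and take $\epsilon\downarrow 0$ to pass from the regularized form $\mbf A_q(\mbf A_q^H\mbf A_q+\epsilon\bsm\Sigma_q^{-1})^{-1}\mbf A_q^H\mbf h_t$ to $\mbf P_{\mbf A_q}\mbf h_t$; the paper instead starts from \eqref{42b}, replaces the singular inverse directly by the Moore--Penrose pseudo-inverse $(\mbf A_q\bsm\Sigma_q\mbf A_q^H)^{\dagger}$, and uses the SVD $\mbf A_q=\mbf U\mbf D\mbf V^H$ to simplify $\mbf A_q\bsm\Sigma_q\mbf V\mbf D^H(\mbf D\mbf V^H\bsm\Sigma_q\mbf V\mbf D^H)^{\dagger}\mbf U^H$ to $\mbf A_q\mbf V\mbf D^{\dagger}\mbf U^H$, which it then matches to the LS solution. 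Your limiting argument is arguably cleaner in justifying \emph{why} the pseudo-inverse is the right reading of \eqref{IRC} in the degenerate case, and it avoids the SVD bookkeeping; the paper's SVD route has the minor advantage that it does not need to appeal to $\hat{\mbf Q}_t$ being (asymptotically) a scalar multiple of the identity --- it works for any vanishing $\hat{\mbf R}_t^I$ and any positive diagonal $\bsm\Sigma_q$. Both rely on $\mbf A_q$ having full column rank, which you state explicitly and the paper leaves implicit.
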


\begin{proof}
	$\mal{L}_{\mrm{CKM}}$ is obtained when $\hat{\bsm \beta}_t$ takes the least-square (LS) estimate as $\left( \mbf A_{q}^H \mbf A_{q} \right)^{-1} \mbf A_{q}^H \mbf h_t$. The corresponding estimate of $\mbf h_t$ is
	\begin{align}
		\mbf h_t^* = \mbf A_{q} \hat{\bsm \beta} =  \mbf A_{q} \left( \mbf A_{q}^H \mbf A_{q} \right)^{-1} \mbf A_{q}^H \mbf h_t. \label{g_LS}
	\end{align}
	By adopting the SVD decomposition $\mbf A_{q} =\mbf U \mbf D \mbf V^H$, $\mbf h_t^*$ in \eqref{g_LS} is equivalently expressed as 
	\begin{align}
		\mbf h_t^* = \mbf A_{q} \mbf V \mbf D^{\dagger}  \mbf U^H \mbf h_t.
	\end{align}
	With $\mbf A_{q} = \mbf U \mbf D \mbf V^H$, $\tilde{\mbf h}_{t,k}^I=0,\forall k$ and $\mbf w_t=0$, the MMSE-IRC \eqref{42b} is expressed as 
	\begin{subequations}
		\begin{align}
			 \hat{\mbf h}_t 
			= & \mbf A_{q} \bsm{\Sigma}_q \mbf A_{q}^H  \left( \mbf A_{q} \bsm{\Sigma}_q \mbf A_{q}^H \right)^{\dagger} \mbf r_t, 
			\label{lim_a}  \\
			= & \mbf A_{q} \bsm{\Sigma}_q \mbf V  \mbf D^H (\mbf D \mbf V^H \bsm{\Sigma}_q \mbf V \mbf D^H)^{\dagger} \mbf U^H \mbf h_t, \label{lim_b} \\
			= & \mbf A_{q} \mbf V  \mbf D^{\dagger} \mbf U^H \mbf h_t.
		\end{align}
	\end{subequations}
	In the following section, we show $\mal{L}_{\mrm{CKM}}$ as a performance lower bound of the MMSE-IRC scheme. 
	% where \eqref{lim_a} and \eqref{lim_b} follow from $\hat{\mbf R}^I = \mbf 0$ and $\hat{\mbf r}_t = \mbf h_t$, respectively, in the interference-free and error-free case.
	% By noting that $(\mbf D^H \mbf D)^{-1} \mbf D = \mbf D^H (\mbf D \mbf D^H)^{-1}$, we obtain $\lim_{P \to \infty} \hat{\mbf g}^{\mrm{3D}} = \mbf h_t^*$, which completes the proof.
	
	% Represent the channel matrix $\mbf{g}_{\mrm{3D}}$ as a linear combination of steering vectors. The received signal can then be expressed as
	% \begin{align}
	% \tilde{\mbf y} = \mathbf A_{\mrm{3D}} \bsm{\gamma} + \mbf w,
	% \end{align}
	% where $\mbf w \sim \mal{CN}(0,\sigma_w^2)$ is the model error plus white noise term, and $\bsm{\gamma}$ is the small-scale fading vector. The MMSE estimator of the small-scale fading coefficients is given by
	% \begin{align}
	% 	\hat{\bsm{\gamma}} = \left( (\mbf A_{q})^H \mbf A_{q} + \frac{\sigma_w^2}{P} \mbf I \right)^{-1} (\mbf A_{q})^H \tilde{\mbf y}.
	% \end{align}
	% In the high SINR regime $P \to \infty $, the above equation converges to 
	% \begin{align}
	% 	\hat{\bsm{\gamma}} = \left( (\mbf A_{q})^H \mbf A_{q} \right)^{-1} (\mbf A_{q})^H \tilde{\mbf y},
	% \end{align}      
	% which completes the proof.
\end{proof}

% Warm-start to calibrate the path parameters due to drifting and birth-death.
\vspace{-0.7cm}

\section{Numerical Results}

In this section, we evaluate the performance of the CKM construction and the CKM-assisted channel estimation. Consider that the BS is equipped with a $4 \times 8$ URA and serves a user on $N=192$ subcarriers. The subcarrier spacing is $\Delta_f =30$ KHz, and the carrier frequency is $6.5$ GHz. We generate the channel response of the user by following the standard TR 38.901 in the UMa NLOS scenario \cite[Table 7.5-6]{TR38901}, where the number of channel clusters is set to $10$ with $20$ sub-paths per cluster, i.e., $L=200$. Besides, the channel parameters change at different positions following the spatial consistency procedure \cite[Sec. 7.6.3.2]{TR38901}. 
% The height of the user is $1.5$ m, and the initial location is randomly generated within the region $[100,500]$ m. 
In the $q$-th grid, we generate the channel dataset of the user by uniformly sampling a length-$d$ line $|\mbb T_q|=10 d$ times. Then, the channel dataset is randomly divided into two sub-sets with the ratio being $7:3$, where the former is used for CKM construction, and the latter is used for channel estimation. The number of interferers is $4$ with the sparsity being $0.5$. The channel models of interferences also follow the standard TR 38.901 in the UMa NLOS scenario with $L_k^I=200,\forall k$.
% The user and iterferers move along a line in a random direction. 
In CKM construction, the iteration number of Algorithm 1 is set to $40$. The parameters of $\bar{L}_k^I=15$ channel paths are extracted per interferer for interference cancellation. In channel estimation, the interference covariance is estimated based on \eqref{C_I_hat}, where $P^{\mrm{thres}}$ takes the value which is $10$ dB lower than the power of interferences plus noise. 
We adopt the constant-modulus pilot with phase randomly taking within $[0,2\pi]$. To evaluate the performance of the CKM construction, we consider the baselines as follows:
\begin{itemize}
    \item OMP-based CKM: The OMP algorithm \cite{OMP} is a greedy algorithm which sequentially finds the first $\bar{L}$ best matching $\bar{\tau}_{q,l}$, $\bar{\theta}_{q,l}$ and $\bar{\phi}_{q,l}$ based on a delay-angular-domain dictionary generated from a uniform sampling over $\mbf a_N(\cdot) \otimes \mbf a_{M_1}(\cdot) \otimes \mbf a_{M_2}(\cdot)$. 
    %Note that $P$ is typically unknown in OMP. OMP with small $P$ will ignore some dominant channel paths, and OMP with large $P$ may extract noise components, causing error propagation. Through parameter tuning, we select $P=50$ as a robust value. 
 After obtaining $\{\bar{\tau}_{q,l},\bar{\theta}_{q,l},\bar{\phi}_{q,l}\}_{l=1}^{\bar{L}}$, $\bsm{\beta}_t$ is estimated by the LS based on \eqref{y_for_beta}. Then, the path powers are obtained as $\bar{\rho}_{q,l} = \sum_t |\hat{\beta}_{t,l}|^2 / |\mbb{T}_q|,\forall l$.  % Note that the OMP algorithm requires an accurate estimate on sparsity level $P$.
    \item ICI-non-cognitive CKM: This scheme is used to evaluate the effectiveness of the interference cancellation of Algorithm 1. Specifically, we treat the interference signals as AWGN and remove the parameter estimation of interferences in Algorithm 1.  
\end{itemize}

To evaluate the performance of channel estimation, we consider the state-of-the-art as follows:
\begin{itemize}
    \item OMP: the steps follow those in OMP-based CKM, except that the computation of $\{\bar{\rho}_{q,l}\}$ is removed.
    \item TMP: The TMP algorithm \cite{STCS} is a compressed sensing method for sparse signal recovery. The original TMP comprises an LMMSE module and a denoiser designed for an AWGN signal model. We modify the LMMSE module to exploit the estimate of the interference covariance using \eqref{C_I_hat} with $j \in[N/4+1,...,3N/4]$. 
    \item VBI-MMSE: The VBI-MMSE \cite{Vincent} adopts the variational Bayesian inference to learn the PDP and spatial covariance of the user channel. By using this information together with an estimate of interference spatial covariance, an MMSE estimator is adopted separately for user spatial channel per delay tap.
    \item Lower bound (LB): The LB corresponds to the CKM accuracy.
\end{itemize}

% 1-2 from one code 

% 1. CKM accuracy versus L = $10/20/30/40/50$ (SINR = $-20:5:20$ dB); add method w/o interference-cancellation

% \subsection{Results with Channel Model of TR 38.901}

\begin{figure}[h] 
	\centering
	\includegraphics[width = 3.2 in]{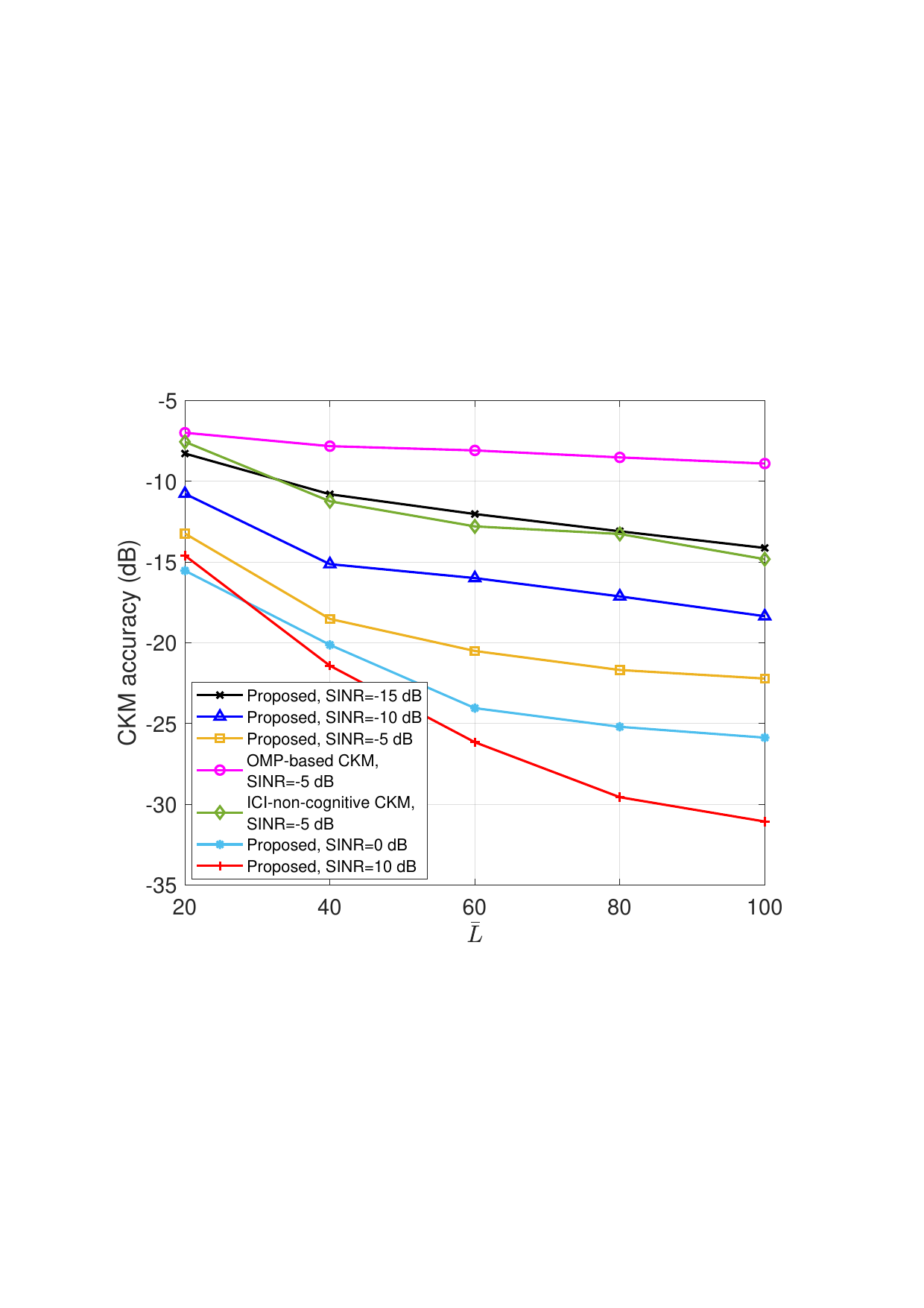}
	\caption{CKM accuracy versus $\bar{L}$. $d = 2$ m.}
	\label{Fig1}
\end{figure}

Fig. \ref{Fig1} shows the CKM accuracy versus $\bar{L}$ at different SINRs. The CKM accuracy of the proposed scheme improves with the increase of $\bar{L}$ and $\mrm{SINR}$. Particularly, a CKM accuracy of $-20.5$ dB is achieved at $\bar{L}=60$ and $\mrm{SINR}= -5$ dB. As comparisons, the accuracies of OMP-based and ICI-non-cognitive CKMs are $-8$ dB and $-12.8$ dB, respectively, which are $12.5$ dB and $7.7$ dB higher than that of Algorithm 1. This result shows a high accuracy of Algorithm 1 at relatively low SINRs and validates the effectiveness of the interference cancellation of Algorithm 1. 

% 2. MSE of channel estimation versus SINR (L = $5/10/25/50$); CKM accuracy as LB

\begin{figure}[h] 
	\centering
	\includegraphics[width = 3.2 in]{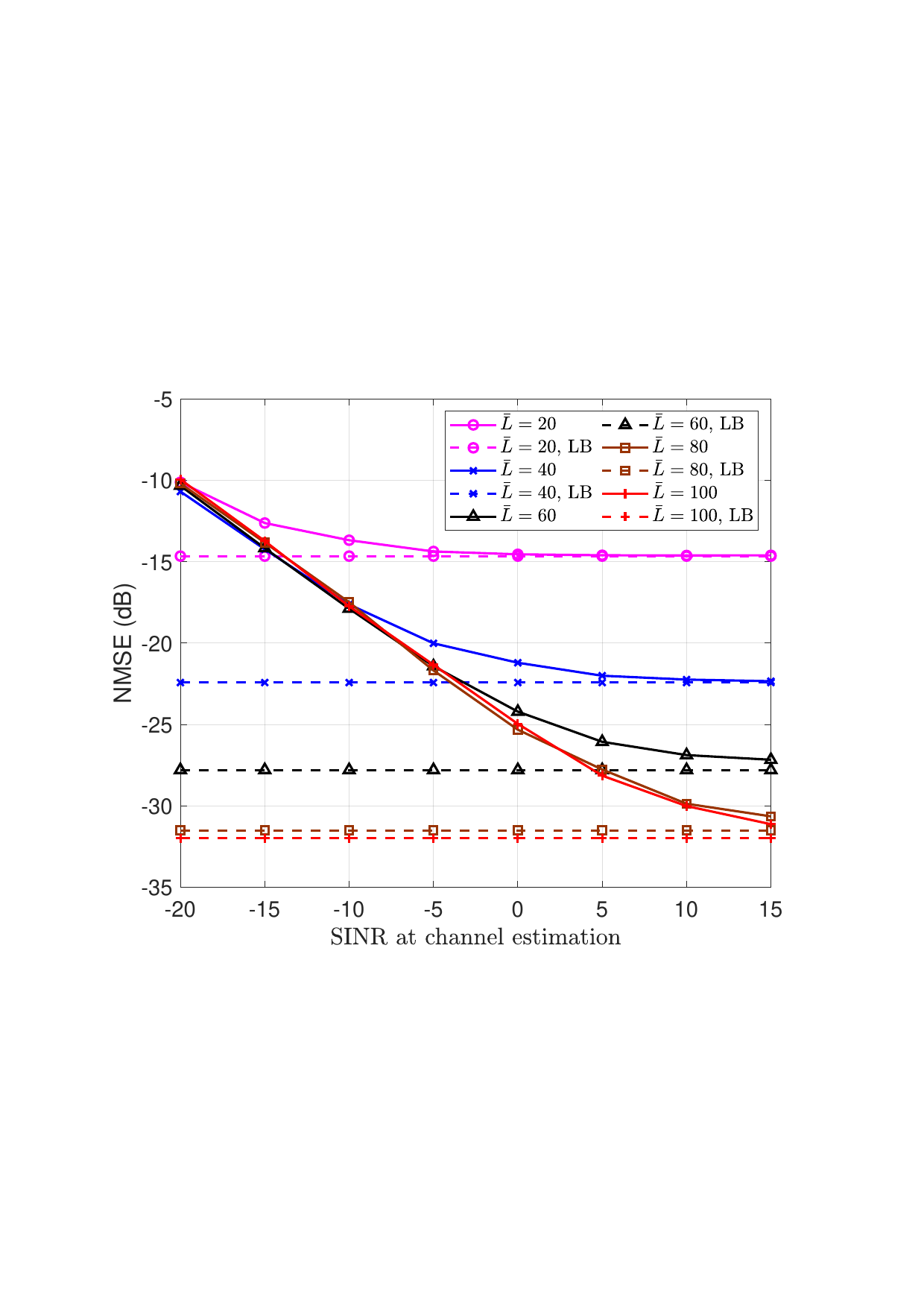}
	\caption{The MSE of the proposed channel estimation versus SINR at different $\bar{L}$. $d$ = 2 m and SINR $= 10$ dB in CKM construction.}
	\label{Fig2}
\end{figure}

% 3. CKM accuracy versus d = $1/2/3/4/5$  (SINR = x/x/x) (optional); add method w/o interference-cancellation

In Fig. \ref{Fig2}, we present the MSE performance of channel estimation for the proposed MMSE-IRC. The estimation performance improves as the SINR increases, and gradually approaches the LB. The MSE performance with a larger $\bar{L}$ outperforms that with a smaller $\bar{L}$. On the other hand, a larger $\bar{L}$ results in a higher computational complexity in channel estimation. In practice, we should choose an appropriate $\bar{L}$ to strike a balance between the estimation performance and the computational complexity.

% \begin{figure}[h] 
% 	\centering
% 	\includegraphics[width = 3.5 in]{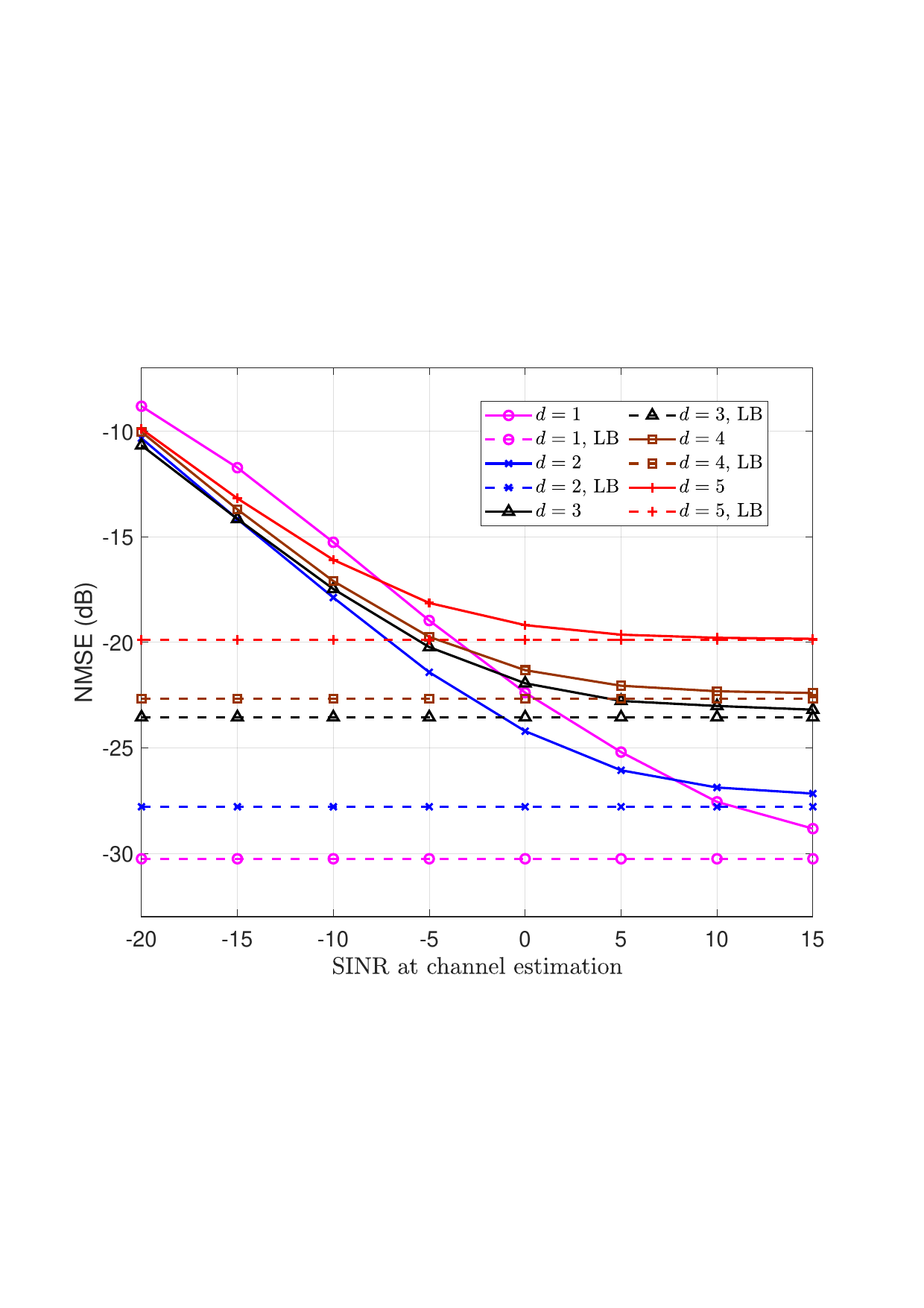}
% 	\caption{MSE of channel estimation versus SINR at different $d$ for the proposed method. $\bar{L}=60$ and SINR $= 10$ dB in CKM construction.}
% 	\label{Fig3}
% \end{figure}

% 4. MSE of channel estimation versus SINR under different $d$; CKM accuracy as LB

% Fig. \ref{Fig3} evaluates how the channel estimation performance varies when using the CKM constructed from different sizes of location grid. We observe that when $d$ increases, there is a degradation in CKM accuracy. This is because a larger physical area corresponds to a larger drifting of channel parameters, which leads to a higher channel representation error. The performances of channel estimation at different $d$ are similar when the $\mrm{SINR}$ is relatively low , i.e., $\mrm{SINR}\le-10$ dB, but the performance gap becomes significant as the $\mrm{SINR}$ increases. 

\begin{figure}[h] 
	\centering
	\includegraphics[width = 3.2 in]{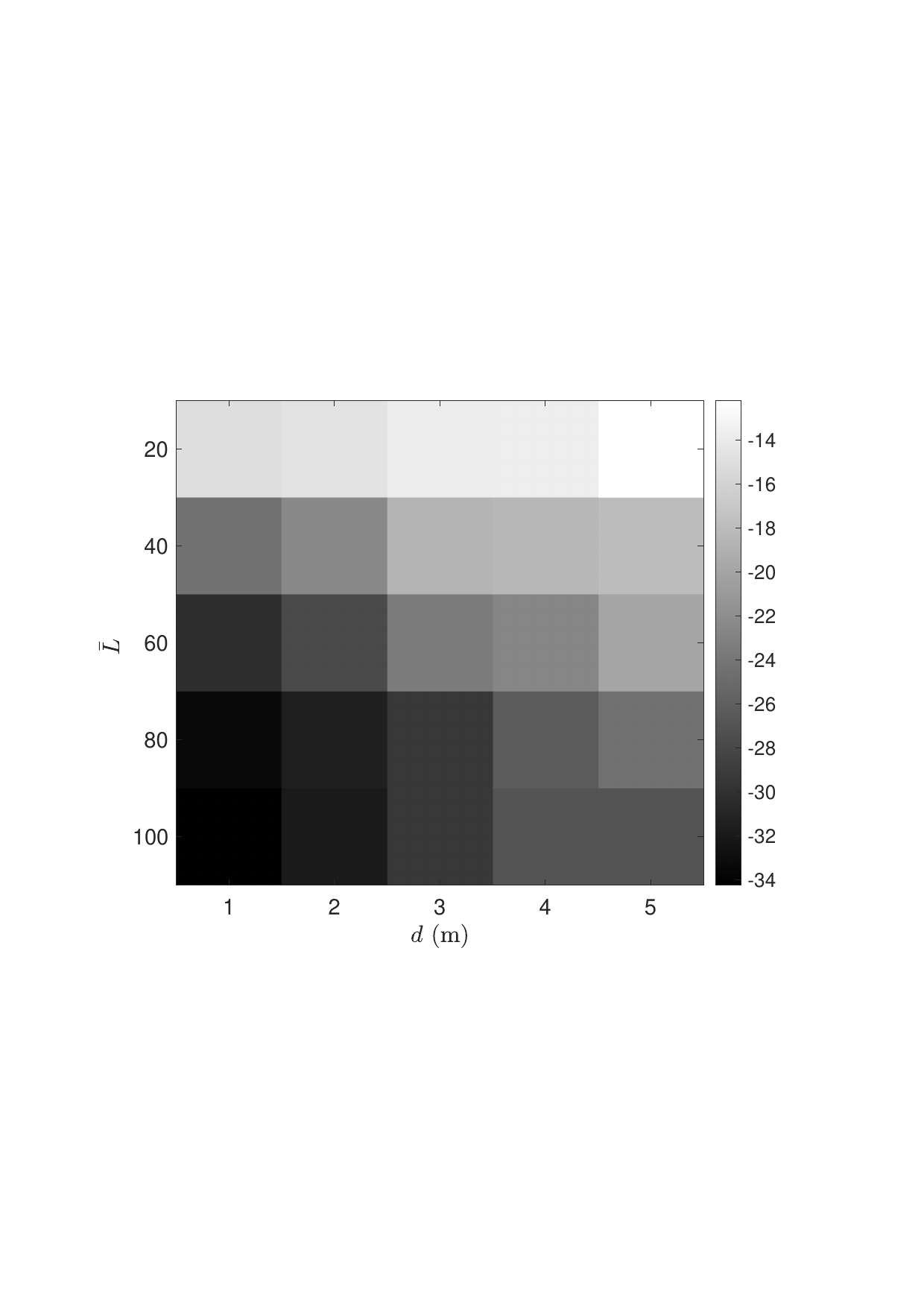}
	\caption{The CKM accuracy of Algorithm 1 at different $\bar{L}$ and $d$. SINR $=10$ dB in CKM construction.}
	\label{Fig4}
\end{figure}

% 5. Algorithm Comparisons: MSE of channel estimation versus SINR; add method w/o interference-cancellation and pilot reduction

Fig. \ref{Fig4} shows the CKM accuracy of Algorithm 1 as a function of $d$ and $\bar{L}$. It is seen that the increase of $\bar{L}$ or the decrease of $d$ corresponds to the enhancement of the CKM accuracy. For example, to achieve the CKM accuracy of $-20$ dB, we require $\bar{L}\ge40$ at $d=2$ or $\bar{L}\ge60$ at $d=3$. This observation indicates that when $d$ is limited to a specific value due to practical restrictions, e.g., positioning accuracy, we can choose an appropriate $\bar{L}$ to achieve the desired CKM accuracy.

\begin{figure}[h] 
	\centering
	\includegraphics[width = 3.1 in]{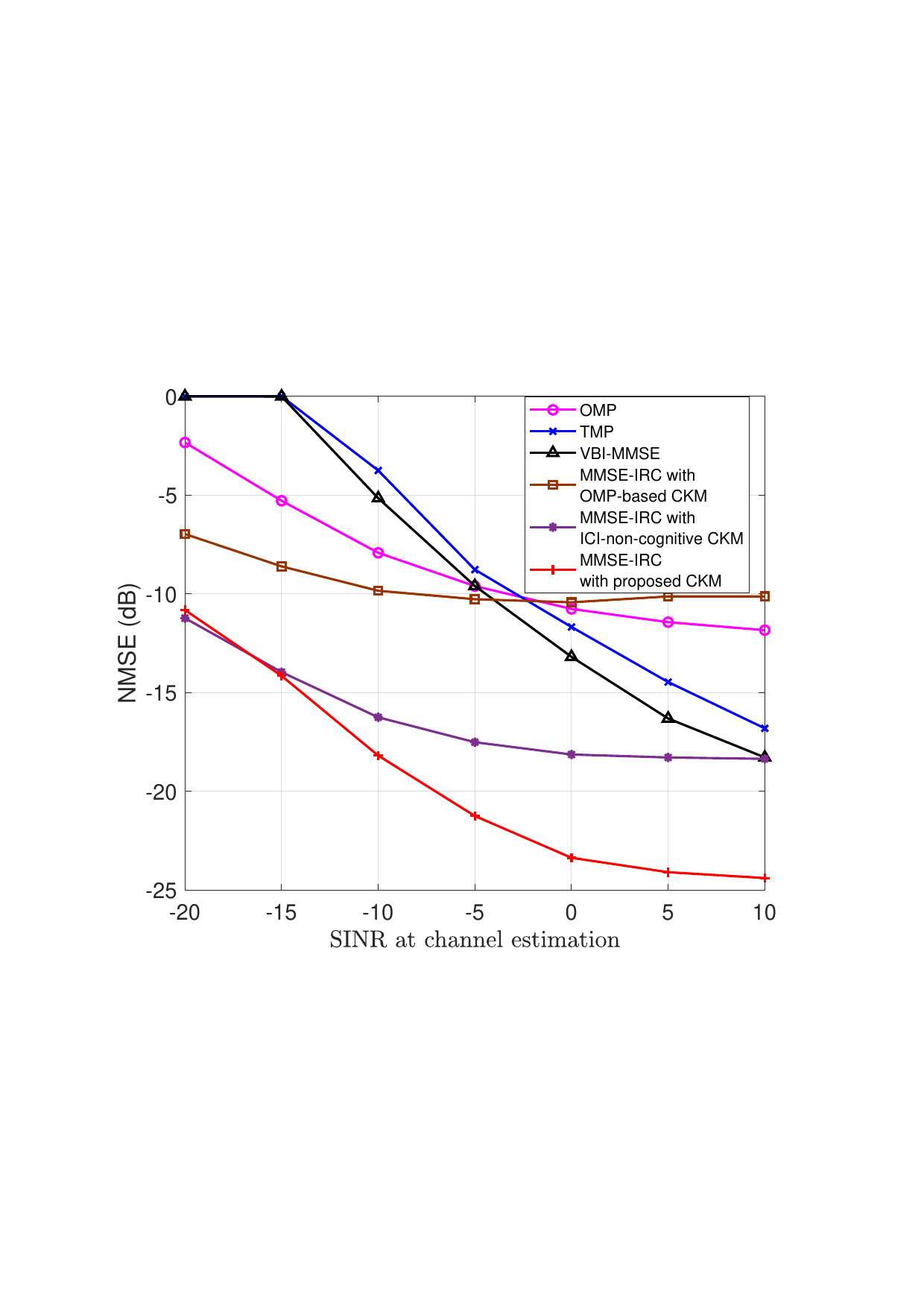}
	\caption{The MSE of channel estimation versus SINR. $\bar{L}=60$, $d=2$ m, and SINR $=0$ dB in CKM construction.}
	\label{Fig5}
\end{figure}

Fig. \ref{Fig5} provides the channel estimation performances of the proposed MMSE-IRC and the baselines. We see that OMP outperforms TMP and VBI-MMSE at $\mrm{SINR}<-5$ dB but is outperformed by TMP and VBI-MMSE at $\mrm{SINR}>-5$ dB. This is because TMP and VBI-MMSE are designed in the Bayesian framework, and the prior learning is inaccurate at a relatively low SINR. The proposed scheme significantly outperforms all the baselines. Particularly, to achieve $\mrm{MSE}=-15$ dB, TMP and VBI-MMSE require $\mrm{SINR}\ge5$ dB and $\mrm{SINR}\ge2.5$ dB, respectively, while the proposed scheme only needs $\mrm{SINR}\ge-15$ dB with an $\mrm{SINR}$ reduction of at least $17.5$ dB. Furthermore, the proposed MMSE-IRC with proposed CKM achieves $\mrm{NMSE}\approx -25$ dB at $\mrm{SINR}=5$ dB, while the MMSE-IRC with OMP-based and ICI-non-cognitive CKM reach $\mrm{NMSE}\approx -18$ dB and $\mrm{NMSE}\approx -10$ dB, respectively. These results indicate that the proposed CKM construction is able to considerably enhance the channel estimation.

% which demonstrates the superiority of the proposed CKM construction for channel estimation enhancement. 

\vspace{-0.4cm}

\section{Conclusions}

In this paper, we investigated the CKM construction and the CKM-assisted channel estimation in MIMO-OFDM systems with ICI. We divided the physical region into grids and used the BS received signals per grid to extract channel model parameters. The CKM construction problem was formulated into a Bayesian inference problem. Then, a hybrid message-passing algorithm was developed, integrating the mechanism of interference cancellation. Based on the CKM outputs, we designed the CKM-assisted MMSE-IRC channel estimator, where the computational complexity was substantially reduced by exploiting the structure of the user channel covariance. Numerical results verified the high accuracy of the proposed CKM and showed the superiority of the CKM-assisted channel estimator over the state-of-the-art counterparts. 

\vspace{-0.3cm} 

\begin{appendices}

\end{appendices}

\bibliographystyle{IEEEtran}
\bibliography{TurboMP}

\end{document}